\lstdefinelanguage{Maple}%
{morekeywords={and,assuming,break,by,catch,description,do,done,%
		elif,else,end,error,export,fi,finally,for,from,global,if,%
		implies,in,intersect,local,minus,mod,module,next,not,od,%
		option,options,or,proc,quit,read,return,save,stop,subset,then,%
		to,try,union,use,uses,while,xor},%
	sensitive=true,%
	morecomment=[l]\#,%
	morestring=[b]",%
	morestring=[d]"%
}[keywords,comments,strings]%
\DeclareMathOperator{\Gr}{Gr}
\newcommand{\orcid}[1]{\href{https://orcid.org/#1}{\includegraphics[width=10pt]{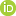}}}
\definecolor{PineGreen}{rgb}{0.0,0.47,0.44}
\definecolor{MidnightBlue}{rgb}{0.1,0.1,0.44}
\definecolor{magenta}{rgb}{1.0,0.0,1.0}
\definecolor{bl1}{HTML}{4479A1}
\definecolor{pur1}{HTML}{52196D}
\definecolor{mag1}{HTML}{2AD0F1}
\definecolor{org1}{rgb}{.92,.39.21}
\definecolor{pur2}{rgb}{.53,.47,.7}
\definecolor{desycyan}{rgb}{0.00,0.68,0.93}
\definecolor{desyorange}{rgb}{0.96,0.52,0.07}
\definecolor{desygray}{rgb}{0.47,0.47,0.47}
\title{The two-loop Amplituhedron} 
\author{Gabriele Dian}
\address{%
	Deutsches Elektronen-Synchrotron DESY\\ 
	Notkestr. 85, 22607 Hamburg, Germany \\
	\email{\hyperlink{gabriele.dian@desy.de}{gabriele.dian@desy.de}}
}
\author{Elia Mazzucchelli}
\address{%
	Max Planck Institute for Physics\\
	Garching bei M\"unchen, Germany\\
	\email{\hyperlink{eliam@mpp.mpg.de}{eliam@mpp.mpg.de}}
}
\author{Felix Tellander}
\address{Mathematical Institute, University of Oxford\\
	Woodstock Road, Oxford OX2 6GG, UK\\
	\email{\hyperlink{felix.tellander@maths.ox.ac.uk}{felix.tellander@maths.ox.ac.uk}}
}
\begin{document}
	\vspace*{-7\baselineskip}%
	\hspace*{\fill} \mbox{\footnotesize{\textsc{MPP-2024-195}}}\newline
	\hspace*{\fill} \mbox{\footnotesize{\textsc{DESY-24-152}}}\newline
	
	\maketitle
	
	\vspace*{4\baselineskip}%
	\begin{abstract}
		The \textit{loop-Amplituhedron} $\mathcal{A}^{(L)}_{n}$ is a semialgebraic set in the product of Grassmannians $\Gr_{\mathbb{R}}(2,4)^L$.  Recently, many aspects of this geometry for the case of $L=1$ have been elucidated, such as its algebraic and face stratification, its residual arrangement and the existence and uniqueness of the adjoint. This paper extends this analysis to the simplest higher loop case given by the two-loop four-point Amplituhedron $\mathcal{A}^{(2)}_4$.
		
	\end{abstract}

	\section{Introduction}

	The \textit{loop Amplituhedron} $\mathcal{A}_{k,n,4,L}$, introduced by Arkani-Hamed and Trnka in \cite{Arkani-Hamed:2013kca}, is a semialgebraic set in $\Gr_{\mathbb{R}}(k,k+4) \times \Gr_{\mathbb{R}}(2,k+4)^L$. It is conjectured that it is a weighted positive geometry \cite{Arkani_Hamed_2017} whose \textit{canonical form} yields the integrand of scattering amplitudes in $\mathcal{N}=4$ super Yang-Mills theory. There, $n$ represents the number of particles, $k$ relates to their helicity, $L$ is the loop order, and we always have $m=4$.
	
	Our work is in fact mainly motivated by positive geometries \cite{Arkani_Hamed_2017}. In that context, the relevant geometric object is a real semialgebraic set, which is equipped with a unique complex meromorphic form called the canonical form. The latter has logarithmic singularities only on its boundaries, and it is holomorphic elsewhere. The definition of a positive geometry \cite[Section 2.1]{Arkani_Hamed_2017} is recursive in the dimension: the residue of the canonical form on each boundary divisor of the semialgebraic set is the canonical form of the boundary, so that the boundary (as a real semialgebraic set) is a positive geometry by itself. In general, it can be difficult to analyze all possible sequences of residues of a canonical form, due to the potentially intricate structure of the boundaries. A natural first step is to analyze the algebraic boundary stratification, and its partition into boundary and residual strata, see Definition \ref{definition_boundary_residual}. Along the lines of \cite{Ranestad:2024svp}, this is the approach we follow in this paper. 
	
	Great progress has been made recently in the study of the boundary structure of tree Amplituhedra, which correspond to the case $L=0$, for $m=2,4$ \cite{Lukowski:2020bya, Even-Zohar:2023del}. For higher loops $L>0$ and $k=0$, $m=4$, the loop Amplituhedron, denoted in the following by $\mathcal{A}^{(L)}_n$, was largely unexplored. A first study of $\mathcal{A}_4^{(L)}$ for $L\leq 4$ has appeared in the physics literature \cite{Franco:2014csa}, where however the presence of \textit{internal boundaries} has not been noticed. Internal boundaries separate regions of opposite orientation and their presence was first observed in \cite{Dian_2023}. Because of this, loop Amplituhedra are not strictly speaking positive geometries \cite{Arkani_Hamed_2017} but are conjectured to be \textit{weighted positive geometries} \cite{Dian_2023}. However, the one-loop case $\mathcal{A}^{(1)}_n$ does not present this feature. In fact, many aspects of $\mathcal{A}^{(1)}_n$ have been elucidated in \cite{Ranestad:2024svp}, including the algebraic and face stratification as well as the existence and uniqueness of the adjoint, which lead to the proof that $\mathcal{A}^{(1)}_n$ is a positive geometry according to the common definition \cite{Arkani_Hamed_2018}. In this paper, we generalise the analysis of \cite{Ranestad:2024svp} and focus on the simplest higher loop case: the two-loop four-point Amplituhedron $\mathcal{A}_4^{(2)}$. 
	
	We determine the stratification of the algebraic boundary of $\mathcal{A}_4^{(2)}$, see Definition \ref{definition_stratum}. Analogously to \cite{Ranestad:2024svp}, we represent all strata as intersections of Schubert varieties, which in this case lie in the product of Grassmannians $\Gr(2,4)^2$. We then determine the intersection of complex strata with $\mathcal{A}_4^{(2)}$ and sort them into boundary and residual strata. Compared to $\mathcal{A}_4^{(1)}$, the real stratification of $\mathcal{A}_4^{(2)}$ exhibits new topological features. The interior is not simply connected and there are boundaries which are disconnected or contain more regions, see Definition \ref{definition_regions}. We checked our results about the algebraic and real stratification computationally, with \texttt{Macauly2} and \texttt{Maple}; our codes are available at \cite{code}. 
	
	Another point we address, generalizing the analysis in \cite{Ranestad:2024svp}, is the adjoint hypersurface of $\mathcal{A}_4^{(2)}$. In the context of Wachspress coordinates and polypols, generalizations of the adjoint have appeared in the literature, see \cite{kohn2021}. In that context, the adjoint is uniquely determined, by requiring it to have a fixed degree and to interpolate the residual arrangement \cite[Proposition 2.2]{kohn2021}. Note that the defining equation of the adjoint curve of a polytope or a polypol is in fact the numerator of its canonical form, see \cite[Theorem 5]{Lam:2022yly} and \cite[Proposition 2.14]{kohn2021}. A definition of the adjoint for general positive geometries is given in \cite[Definition 2]{Lam:2022yly}. Recently, the authors of \cite{Ranestad:2024svp} proved that the one-loop Amplituhedron $\mathcal{A}^{(1)}_n$ has a unique adjoint hypersurface, fully determined by the residual arrangement. We extend this result to the two-loop case $\mathcal{A}^{(2)}_4$.
	
	This paper is structured as follows. Section \ref{sec:Loop_def} defines loop Amplituhedra and provides a geometric way of visualizing $\mathcal{A}^{(2)}_4$. Section \ref{sec:Algebraic} presents our results about the algebraic stratification of $\mathcal{A}_4^{(2)}$. Section \ref{sec:Real} discusses the structure of the real stratification, in particular of the residual arrangement. Section \ref{sec:Topology} is dedicated to topological considerations: we compute the fundamental group of the interior of $\mathcal{A}_4^{(2)}$, the number of connected components and regions, see Definition \ref{definition_regions}, of each (real) boundary stratum. Finally, Section \ref{sec:adjoint} explains how the adjoint geometry of $\mathcal{A}^{(2)}_4$ is uniquely determined by the residual arrangement.

	\section{Loop Amplituhedra and Grassmannians}
	\label{sec:Loop_def}
	
	Conventionally, the Amplituhedron depends on four integer parameters $k, m, n, L$; see for
	instance \cite[Eq. (6.36)]{Arkani_Hamed_2017}. In this paper, we fix the values $k=0$, $m=4$, which turn out to be equivalent to $k = m = 2$ with $L-1$ instead of $L$. The following definition appeared for the first time in \cite{Arkani_Hamed_2018}.
	
	\begin{dfn}[The loop Amplituhedron]
		\label{def_loop_ampl}
		The $n$-point $L$-loop Amplituhedron $\mathcal{A}^{(L)}_n$ for $n \geq 4$ and $L \geq 1$ is the real semialgebraic set in the $L$-fold product of Grassmannians $\Gr_{\mathbb{R}}(2,4)^{L}$ cut out by the following conditions. We fix a totally positive matrix $Z \in \mathbb{R}^{n\times 4}$ and
		represent points in $\Gr_{\mathbb{R}}(2,4)^{L}$ by tuples $(A_1B_1,\dots,A_LB_L)$ of $2\times 4$ matrices $A_{\ell} B_{\ell}$ for $\ell=1,\dots,L$.
		Then, the inequalities cutting out $\mathcal{A}^{(L)}_n$ are given by
		\begin{enumerate}
			\item $\langle A_\ell B_\ell ii+1 \rangle >0$ for $i=1,\dots,n-1$ and $\langle A_\ell B_{\ell} 1n \rangle >0$,
			\item the sequence $(\langle A_{\ell}B_{\ell}12 \rangle, \langle A_{\ell}B_{\ell}13 \rangle, \dots , \langle A_{\ell}B_{\ell}1n \rangle)$ has two sign flips, ignoring zeros,
			\item $\langle A_{\ell}B_{\ell}A_{\ell '}B_{\ell '} \rangle >0$ for every $\ell < \ell '$,
		\end{enumerate}
		for every $\ell, \ell' \in 1,\dots,L$, where we denote by $i$ the $i$-th column of $Z$ and use \textit{twistor coordinates}: for $X,Y \in \Gr_{\mathbb{R}}(2,4)$ we write $\langle XY \rangle$ for the determinant of the $4 \times 4$ matrix obtained by stacking together $X$ with $Y$. Then, we define $\mathcal{A}^{(L)}_n$ to be the Euclidean closure of this semialgebraic set in $\Gr_{\mathbb{R}}(2,4)^L$.
	\end{dfn} 
	We point out that there is an alternative definition of the loop Amplituhedron, see \cite[Eq. (6.35)]{Arkani_Hamed_2017}. In order to introduce it we define the following, which is a special case of \cite[Section 6.4]{Arkani_Hamed_2017}.
	\begin{dfn}[The positive loop Grassmannian]
		\label{def_loop_Grassm}
		We define the $n$-point $L$-loop positive Grassmannian $\Gr_{>0}(2,n;L)$ to be the real semialgebraic set in the $L$-fold product of Grassmannians $\Gr_{\mathbb{R}}(2,n)^{L}$ satisfying the following condition. Let $(D_1,\dots,D_n)$ be matrix representatives of a point in $\Gr_{>0}(2,n)^{L}$. Then, this corresponds to a point in $\Gr_{>0}(2,n;L)$ if for every $1\leq \ell \leq n/2$ and for every $(i_1,\dots,i_{\ell}) \in \binom{[L]}{\ell}$, the $2\ell \times n$ matrix obtained from stacking together $D_{i_1}, \dots , D_{i_{\ell}}$ has positive maximal minors. We define $\Gr_{\geq 0}(2,n;L)$ as the Euclidean closure of $\Gr_{>0}(2,n;L)$ inside $\Gr_{\mathbb{R}}(2,n)^{L}$.
	\end{dfn}
	Then, the $n$-point $L$-loop Amplituhedron is alternatively defined as the image of $\Gr_{\geq 0}(2,n;L)$ under the rational map 
	\begin{equation}\label{Z_projection}
		\tilde{Z} \colon \Gr(2,n)^{L} \rightarrow \Gr(2,4)^{L}, \ ([D_1],\dots,[D_L]) \mapsto ([D_1\cdot Z], \dots, [D_L \cdot Z]) \ ,
	\end{equation}
	where square brackets indicate the equivalence class of a matrix regarded as a point in the Grassmannian.
	To the best of our knowledge, the equivalence of these two definitions is not proven for general $n$ and $L$, but it has been proven in \cite[Lemma 2.3]{Ranestad:2024svp} for $L=1$.
	
	\subsection{Visualizing the two-loop Amplituhedron}
	
	From now on we fix $n=4$. In this case, the map (\ref{Z_projection}) is an isomorphism and we can take without loss of generality $Z$ to be the $4\times 4$ identity matrix. Then, one can check that Definition \ref{def_loop_ampl} and \ref{def_loop_Grassm} are equivalent. The one-loop geometry is just the positive Grassmannian $\mathcal{A}^{(1)}_4=\Gr_{\geq 0}(2,4)$, while what we really are interested in is the two-loop Amplituhedron 
	\begin{equation}\label{two_loop_Ampl}
		\mathcal{A}^{(2)}_4 = \Gr_{\geq 0}(2,4;2) = \{(AB,CD) \in \Gr_{\geq 0}(2,4)^{2} : \langle ABCD \rangle \geq 0 \} \ .
	\end{equation}
	Note that the Grassmannian $\Gr(k,n)$ can be identified with the space of $(k-1)$-planes in $\mathbb{P}^{n-1}$.
	As our notation suggests, we favour this picture, i.e. we identify an element $AB \in \Gr(2,4)$ with a line in $\mathbb{P}^3$ and $A$ and $B$ as distinct points in $\mathbb{P}^3$ lying on $AB$. In this way we can visualize $\mathcal{A}^{(1)}_4 $ in the following way. Let us take a chart of $\mathbb{P}^{3}(\mathbb{R})$ containing the four points $Z_i$ in a compact region and let $AB$ be a line in $\mathbb{P}^3(\mathbb{R})$. By standard projective geometry, we can represent it by any two distinct points on it. We choose
	\begin{equation}\label{A_on_triangle}
		A:=AB \cap (-412)= -Z_4 \langle AB 12 \rangle + Z_1 \langle AB 42 \rangle + Z_2 \langle AB 14 \rangle \ ,
	\end{equation}
	where we denote by $(ij)$ the line through $Z_i$ and $Z_j$, and by $(ijk)$ the plane through $Z_i$, $Z_j$ and $Z_k$. Then,
	the conditions 1 and 2 in Definition \ref{def_loop_ampl} force the coefficients in (\ref{A_on_triangle}) in front of the $Z_i$'s to be positive. This means that $A$ lies in the triangle $T_1$, given by the convex hull of $-Z_4$, $Z_1$ and $Z_2$. Similarly, we chose the point $B$ as
	\begin{equation}\label{B_on_triangle}
		B:=AB \cap (234)= Z_2 \langle AB 34 \rangle + Z_3 \langle AB 42 \rangle + Z_4 \langle AB 23 \rangle \ ,
	\end{equation}
	Again, one finds that the Amplituhedron condition forces $B$ to lie in the triangle $T_2$, given by the convex hull of $Z_2$, $Z_3$ and $Z_4$. Note the necessary choice of the minus sign in front of $Z_4$ in (\ref{A_on_triangle}); a visualization and more details on this can be found in \cite{GabriBlog}. One easily verifies that the geometric picture for the line $AB$ is actually equivalent to the $L=1$ Amplituhedron conditions, i.e. 
	\begin{equation}\label{line_points}
		AB \in \mathcal{A}^{(1)}_4 = \Gr_{\geq 0}(2,4) \iff A \in T_1 \  \text{and} \ B \in T_2 \ .
	\end{equation}
	It is then straightforward to generalize this picture to $\mathcal{A}^{(L)}_4$: we simply introduce $L$ lines, i.e. $L$ points $A_{\ell}$ in $T_1$ and $L$ points $B_{\ell}$ in $T_2$. The complication arises from the condition (3) in Definition \ref{def_loop_ampl}, which imposes relative constraints between pairs of lines $A_{\ell}B_{\ell}$ and $A_{\ell '}B_{\ell '}$. Let us now see how this works for $L=2$. In this case we have two lines, i.e. two pairs of points: $A,C$ in $T_1$ and $B,D$ in $T_2$. These are subject to only one condition given by
	\begin{equation}\label{ABCD}
		\begin{aligned}
			\langle AB CD \rangle =& \langle AB 12 \rangle \langle CD 34 \rangle + \langle AB 34 \rangle \langle CD 12 \rangle + \langle AB 23 \rangle \langle CD 14 \rangle +\\
			&\langle AB 14 \rangle \langle CD 23 \rangle - \langle AB 13 \rangle \langle CD 24 \rangle - \langle AB 24 \rangle \langle CD 13 \rangle \geq  0 \ .
		\end{aligned}
	\end{equation} 
	Note that there is a symmetry under the group $\mathbb{Z}_2 \times D_4$, where $\mathbb{Z}_2$ acts by exchanging $AB$ with $CD$, and the dihedral group $D_4$ of order $4$ acting on the $Z_i$'s. 
	Equation (\ref{ABCD}) can also be rewritten using the Plücker relations for $AB$ and $CD$ as
	\begin{equation}
		\begin{aligned}\label{factorised_ABCD}
			\langle AB CD \rangle = \frac{\Delta^{24}_2 \Delta^{24}_4-\Delta^{24}_1 \Delta^{24}_3 }{\langle AB 24\rangle \langle CD 24 \rangle}  \geq 0 \ ,
		\end{aligned}
	\end{equation}
	where $\Delta^{24}_i$ is defined as
	\begin{equation}\label{Deltas}
		\Delta^{24}_i := \langle AB i i+1 \rangle \langle CD 24 \rangle - \langle AB 24 \rangle \langle CD ii+1 \rangle \ .
	\end{equation}
	We can rewrite (\ref{Deltas}) more geometrically as\footnote{The expression $\langle AB2(CD) \cap (ijk) \rangle$ indicates that the last column of the $4 \times 4$ matrix corresponds to $(CD) \cap (ijk)$.}
	\begin{equation}\label{Delta24_2}
		\Delta_{2}^{24} =   -\langle AB2(CD) \cap (234) \rangle =- \langle AB2D \rangle \ ,
	\end{equation}
	where in the last equality we used (\ref{B_on_triangle}) for $CD$. Similarly, we have\footnote{Note that these expressions are valid only after (\ref{A_on_triangle}) and (\ref{B_on_triangle}).}
	\begin{equation}\label{other_Deltas}
		\Delta_{1}^{24} = -\langle AB2C \rangle \ , \quad \Delta_{3}^{24} = \langle AB4D \rangle \ , \quad \Delta_{1}^{24} = -\langle AB2C \rangle \ .
	\end{equation}
	Equation (\ref{factorised_ABCD}) has the following geometric interpretation: 
	For fixed $A$ and $B$, the sign of (\ref{Delta24_2}) depends only on which side of the line $(2B)$ the point $D$ lies within $T_2$.
	
	From (\ref{other_Deltas}) we can see that the sign regions of the vector $(\Delta^{24}_1,\Delta^{24}_2,\Delta^{24}_3,\Delta^{24}_4)$ triangulate both $T_1$ and $T_2$ into four smaller triangles each.
	Positivity of (\ref{factorised_ABCD}) allows only 12 out of 16 sign-regions. Out of the 16 sign-regions, the following four do not intersect $\mathcal{A}^{(2)}_4$:
	\begin{equation}\label{sign_regions}
		(+---) \,, \quad (+-++) \, , \quad (-+--) \, , \quad (-+++) \ . 
	\end{equation}
	An illustration of this can be found in Figure \ref{two_triangles}.

	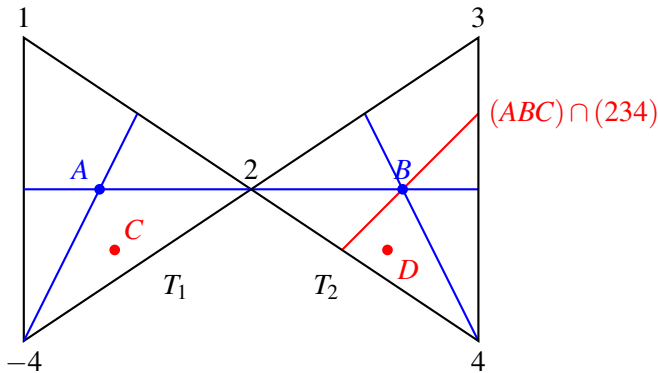
\begin{figure}
		\centering
		\begin{tikzpicture}
			
			\coordinate (1) at (-3,2);
			\coordinate (2) at (0,0);
			\coordinate (-4) at (-3,-2);
			\coordinate (2A) at (-3,0);
			\coordinate (-4A) at (-1.5,1);
			\coordinate (A) at (-2,0);
			\draw[thick] (1) -- (2) -- (-4) -- cycle;
			\draw[blue, thick] (2) -- (2A);
			\draw[blue, thick] (-4) -- (-4A) ;
			\node[above ] at (1) {$1$};
			\node[above] at (2) {$2$};
			\node[below ] at (-4) {$-4$};
			\node[below right ] at (-1.3,-1) {$T_1$};
			\node[above left ] at (A) {{\textcolor{blue}{$A$}}};
			\fill[blue] (A) circle (2pt);

			\coordinate (3) at (3,2);
			\coordinate (4) at (3,-2);
			\coordinate (2B) at (3,0);
			\coordinate (4B) at (1.5,1);
			\coordinate (B) at (2,0);
			\coordinate (P1) at (1.2,-0.8);
			\coordinate (P2) at (3,1);
			\draw[blue, thick] (4) -- (4B);
			\draw[blue, thick] (2) -- (2B);
			\draw[red,thick] (P1) -- (P2);
			\draw[thick] (2) -- (3) -- (4) -- cycle;
			\node[above] at (3) {$3$};
			\node[below ] at (4) {$4$};
			\node[below left ] at (1.3,-1) {$T_2$};
			\node[above] at (B) {{\textcolor{blue}{$B$}}};
			\fill[blue] (B) circle (2pt);

			
			\coordinate (C1) at (-1.8,-0.8);
			\coordinate (C2) at (-1,0.3);
			\node[above right] at (C1) {\textcolor{red}{$C$}};
			\fill[red] (C1) circle (2pt);
			
			\coordinate (D1) at (1.8,-0.8);
			\coordinate (D2) at (1,0.3);
			\node[below right] at (D1) {\textcolor{red}{$D$}};
			\fill[red] (D1) circle (2pt);
			
			\node[right] at (P2) {\textcolor{red}{$(ABC) \cap (234)$}};
		\end{tikzpicture} 
		\caption{Illustration of one point $(AB,CD)$ in $\mathcal{A}^{(2)}_4$ according to (\ref{line_points}). For fixed $A,B,C,$ condition (\ref{ABCD}) fixes on which side of the line $(ABC) \cap(234)$  $D$ lies in $T_2$. The blue lines represent the vanishing of $\Delta^{24}_i$ for $i\in \{1,\dots,4\}$, see (\ref{Delta24_2}) and (\ref{Deltas}).}
		\label{two_triangles}
	\end{figure}

	\section{Algebraic and real boundary stratification}
	\label{sec:Algebraic}

	In this section, we analyze the stratification of the algebraic boundary of the two-loop Amplituhedron and give a full list of complex strata labeled as intersections of Schubert varieties, together with their multiplicities. Finally, we will take the intersection of these strata with the amplituhedron and compute the real boundary stratification.

	We denote the Euclidean boundary of $\mathcal{A}^{(2)}_4$ in $\Gr_{\mathbb{R}}(2,4)^2$ by $\partial \mathcal{A}_4^{(2)}$ and the \textit{algebraic boundary} by $\partial_a \mathcal{A}_4^{(2)}$, which is defined as  the Zariski closure in $\Gr(2,4)^2$ of $\partial \mathcal{A}_4^{(2)}$. 
	Following \cite{Lam:2022yly,Ranestad:2024svp}, we stratify $\partial_a \mathcal{A}^{(2)}_4$ as follows. 
	
	\begin{dfn}\label{definition_stratum}
		We define a \textit{stratum} $S \subset \partial_a \mathcal{A}_4^{(2)}$ to be a complex variety in $\Gr(2,4)^2$ constructed recursively as follows. If the codimension of $S$ is one, then it is one of the irreducible components of $\partial_a \mathcal{A}_4^{(2)}$. If $S$ has codimension $r>1$, then it is an irreducible component of the intersection of two strata of codimension $r-1$ in $\partial \mathcal{A}_4^{(2)}$.
	\end{dfn}
	
	We checked computationally with \texttt{Macaulay2} \cite{M2} that the set of strata is closed under intersections.
	We also checked that all strata are normal. Therefore, one can equivalently define a stratum by iteratively taking the irreducible components of singular loci of codimension one, see \cite[Chapter 9, Theorem 8]{noauthororeditor}.

	\subsection{The boundary divisors}
	\label{section_boundary_divisors}

	We first determine the \textit{boundary strata}, i.e. the irreducible components of $\partial_a \mathcal{A}_4^{(2)}$. 
	By (\ref{two_loop_Ampl}), $\partial_a \mathcal{A}_4^{(2)}$ will contain the algebraic stratification of two copies of $\mathcal{A}^{(1)}_4 $, which one can find in \cite[Table 1]{Ranestad:2024svp}. The elements in the stratification of $\mathcal{A}^{(1)}_4 $ can be expressed as intersections of the following Schubert varieties:{\small
		\begin{equation}\label{Schubert_divisors}
			\begin{aligned}
				L^{(1)}_{i} :=\{AB \cap (ii+1) \neq \emptyset \} \, , \
				V^{(1)}_i := \{i \in AB \} \, , \ 
				P^{(1)}_i := \{AB \subset (i-1ii+1) \} \, ,
			\end{aligned}
	\end{equation}}
	for $i\in\{1,\dots,4\}$ taken modulo $4$. Analogously, we use the superscript $(2)$ for the corresponding varieties for $CD$. We refer to elements in $\partial_a \mathcal{A}_4^{(2)}$ that can be written as intersections of only these Schubert varieties as \textit{product strata}, and to their union as \textit{product stratification}. In this way, by (\ref{two_loop_Ampl}) there are eight boundary divisors in the product stratification of $\mathcal{A}_4^{(2)}$, given by $L_i^{(\ell)}$ for $i=1,\dots,4$ and $\ell = 1,2$.
	There is an additional bidegree $(1,1)$ boundary divisor given by the vanishing of (\ref{ABCD}), which we denote by
	\begin{equation}\label{AB_CD_divisor}
		L(1,2) := \{AB,CD \subset \mathbb{P}^3 : AB \cap CD \neq \emptyset \} \subset \Gr(2,4)^2 \ .
	\end{equation}
	There are therefore nine boundary divisors\footnote{This fact can be easily checked by choosing a parametrisation of $\Gr_{\mathbb{R}}(2,4)^2$.}.
	We point out that the singular locus of $L_i^{(1)}$ and that of $L(1,2)$ have both codimension three and are given by $AB = (ii+1)$ and $\{AB=CD\}$ respectively. Therefore, all boundary divisors are normal algebraic varieties.
	
	The more intricate part of the determination of the stratification of $\partial_a \mathcal{A}_4^{(2)}$ lies in the strata involving an intersection with (\ref{AB_CD_divisor}). For instance, on $L^{(1)}_1 L^{(2)}_1$ (where from now on we omit the intersection symbols between Schubert varieties) we have that (\ref{factorised_ABCD}) factorises as
	\begin{equation}\label{Li_Li_factorisation}
		\langle ABCD \rangle = \frac{\Delta^{24}_{2} \Delta^{24}_{4}}{\langle AB 24\rangle \langle CD 24 \rangle} \ ,
	\end{equation}
	which consists of two irreducible components, both of codimension three.  In terms of Shubert varieties, we will denote this as 
	\begin{equation}\label{L12_factroization}
		L(1,2) L^{(1)}_1 L^{(2)}_1  = V_{1}(1,2)  \cup P_1(1,2) \ 
	\end{equation}
	where the component $V_{1}(1,2)$ is defined by the vanishing of $\Delta_{4}^{24}$ and corresponds to the geometric configuration in which the three lines $AB$, $CD$ and $(12)$ intersect in a single point, while $P_1(1,2)$ is defined by the vanishing of $\Delta_{2}^{24}$, denotes the component on which they are coplanar. 
	
	Other relevant instances of factorisation are 
	\begin{equation}\label{factorization_V12}
		\begin{aligned}
			V_{i-1}(1,2)V_i(1,2) &= V_i^{(1)}V_i^{(2)} \cup P_i^{(12)}  \ , \\
			P_{i-1}(1,2)P_i(1,2) &= P_i^{(1)}P_i^{(2)} \cup V_i^{(12)}  \ ,
		\end{aligned}
	\end{equation}
	where $V_i^{(12)}$ denotes the two-dimensional irreducible component where $AB=CD$ passes through $Z_i$, while $P_i^{(12)}$ denotes the two-dimensional component where $AB=CD$ lies in the plane $(i-1ii+1)$. Note that the irreducible components on the right-hand side of (\ref{factorization_V12}) have dimension four and two respectively. This is due to the fact that the divisors we are considering are non-generic. 
	
	\subsection{Strata in each codimension}\label{section_strata}

	We now present our result about the computation of all strata in $\partial_a \mathcal{A}_4^{(2)}$. We computed all strata using the Schubert-like relations listed in Section \ref{sec:Schubert} and checked our result using \texttt{Macaulay2} \cite{M2}.
	A summary of the counting of strata ordered by codimension is given in Table \ref{tab: algebraic strat two-loop}.  
	We give a list of all strata in Appendix \ref{appendix_lists}.
	\begin{table}[h!]
		\centering
		\begin{tabular}{l|c|c|c|c|c|c|c|c}
			Codimension: & 1 & 2 & 3 & 4 & 5 & 6 & 7 & 8 \\
			\hline
			Components: & 9 & 44 & 144 & 324 & 450 & 370 & 168 & 36 \\
			\hline
			Boundaries: & 9 & 44 & 144 & 286 & 356 & 306 & 156 & 34 \\
			\hline
			Residual: & 0 & 0 & 0 &  38 & 94 & 64 & 12 & 2 \\
			\hline
			Regions: & 9 & 52 & 176 & 326 & 416 & 342 & 156 & 34
		\end{tabular}
		\caption{Number of strata in each codimension of the algebraic boundary of $\mathcal{A}^{(2)}_4$, with a breakdown of boundary and residual components, as well as the number of regions, see Definition \ref{definition_regions}. }
		\label{tab: algebraic strat two-loop}
	\end{table}
	Our results are summarized as follows. We present one table for each codimension, ranging from one to eight. The indices run as follows: $\ell =1,2$ is taken modulo $2$ and $i,j,k,l =1,\dots,4$ are taken modulo $4$. Each stratum's multiplicity is given, along with its classification as either part of the boundary or residual arrangement. The next column shows the number of connected components formed by the intersection with $\mathcal{A}^{(2)}_{4}$, while the final column lists the number of regions for each boundary stratum, both of which are discussed in Section \ref{section_CW_structure}.  Additionally, each table features a horizontal separation into two families: strata above the line belong to the product stratification, while those below do not, and are therefore contained in $L(1,2)$.

	\subsection{Schubert-like relations}
	
	\label{sec:Schubert}
	By standard results in Schubert calculus, the varieties in (\ref{Schubert_divisors}) satisfy 
	\begin{equation}\label{1_loop_relations}
		L_{i-1}^{(\ell)}L_{i}^{(\ell)}=V_i^{(\ell)}\cup P_i^{(\ell)} \  , \quad  V_i^{(\ell)}L_{i+1}^{(\ell)}= V_i^{(\ell)}P_{i+1}^{(\ell)} \ , \quad  V_i^{(\ell)}L_{i-2}^{(\ell)}= V_i^{(\ell)}P_{i-1}^{(\ell)} \ .
	\end{equation}
	They also satisfy (\ref{L12_factroization}) and (\ref{factorization_V12}). Additionally, there are other relations. One way of finding these is to consider one of the lines to be fixed and using standard Schubert calculus with respect to the other one. Here we give a (conjecturally minimal) list of relations.
	First of all, we have product strata which lie in $L(1,2)$:
	\begin{equation}\label{L_12_invariant}
		V^{(\ell)}_i V^{(\ell+1)}_i \ , \qquad P^{(\ell)}_i P^{(\ell+1)}_i \ ,  \qquad V^{(\ell)}_i V^{(\ell)}_{i+1}L^{(\ell+1)}_i \ , \qquad L^{(\ell)}_i V^{(\ell+1)}_i V^{(\ell+1)}_{i+1} \ ,
	\end{equation}
	and all the substrata thereof. Secondly, we find the following relations:
	\begin{equation}\label{2_loop_relations}
		\begin{aligned}
			V_i(1,2)V_{i}^{(\ell)} &= V_{i}^{(\ell)} V_{i}^{(\ell+1)} \cup V^{(\ell)}_{i} V^{(\ell)}_{i+1} L^{(\ell+1)}_{i} \ , \\  P_i(1,2)P_{i}^{(\ell)} &= P_{i}^{(\ell)} P_{i}^{(\ell+1)} \cup V^{(\ell)}_{j} V^{(\ell)}_{i+1} L^{(\ell+1)}_{i} \ , \\
			V_j(1,2)V_{i}^{(\ell+1)} V_{i}^{(\ell)} &=  V_{i}^{(\ell)} V_{i}^{(\ell+1)} \ , \  P_j(1,2)P_{i}^{(\ell)} P_{i}^{(\ell+1)} = P_{i}^{(\ell)} P_{i}^{(\ell+1)} \quad j = i,i-1  \ , \\
			P_i(1,2) V^{(\ell)}_{i} V^{(\ell)}_{i+1}  &= V_i(1,2) V^{(\ell)}_{i} V^{(\ell)}_{i+1} =  V^{(\ell)}_{i} V^{(\ell)}_{i+1} L^{(\ell+1)}_{i}  \ , \\
			P_i(1,2) P_{i+2}(1,2) &= V_i(1,2) V_{i+2}(1,2) \ , \\
			V_i(1,2) P_{i+2}(1,2) &= V_i(1,2) L_{i+2}^{(\ell)}L^{(\ell+1)}_{i+2} =  P_{i+2}(1,2) L_{i}^{(\ell)}L^{(\ell+1)}_{i} \ , \\ 
			V_i(1,2) P_{i+1}(1,2) &= V_i(1,2) P^{(\ell)}_i P^{(\ell+1)}_i \cup P_{i+1}(1,2)V_{i}^{(\ell)}V_{i}^{(\ell+1)} \ , \\
			V^{(12)}_i P^{(12)}_i &= V^{(12)}_i P^{(\ell)}_i  = V^{(12)}_i P^{(\ell+1)}_i = P^{(12)}_i V^{(\ell)}_i = P^{(12)}_i V^{(\ell+1)}_i \ , \\ 
			V_i^{(12)} P^{(\ell)}_{i+1} P^{(\ell+1)}_{i+1} &= V_{i+1}(1,2) V^{(\ell)}_i V^{(2+1)}_i \ .
		\end{aligned}
	\end{equation}

	\subsection{Real stratification}
	\label{sec:Real}
	
	We now determine the real boundary stratification of $\mathcal{A}^{(2)}_4$. We introduce Definition \ref{definition_boundary_residual} which is motivated by the discussion we present in Section~\ref{sec:adjoint}. 
	\begin{dfn}\label{definition_boundary_residual}
		Let $S$ be a complex stratum in $\partial_a \mathcal{A}^{(2)}_4$, see Definition \ref{definition_stratum}. We write $S_{\geq 0} := S \cap \mathcal{A}^{(2)}_4$ and denote by $S_{>0}$ the relative interior of $S_{\geq 0}$ in the real points $S(\mathbb{R})$ of $S$. Note that $S_{>0}$ is a real manifold. We call $S$ a \textit{boundary} or \textit{face} of $\mathcal{A}^{(2)}_4$ if the real dimension of $S_{>0}$ is equal to the complex dimension of $S$, and \textit{residual} otherwise. We call \textit{residual arrangement} of $\mathcal{A}^{(2)}_4$ the (projective) variety given by the union of all residual strata.
	\end{dfn}%
	
	We now focus our attention on the determination of the residual arrangement of $\mathcal{A}^{(2)}_4$. 
	Recall that the residual arrangement of $\mathcal{A}^{(1)}_4$ is empty. Instead, for $\mathcal{A}^{(2)}_4$ the residual arrangement is 4-dimensional and its 4-dimensional irreducible components are
	\begin{center}
		\begin{tikzpicture}[]

			\begin{scope}[xshift=0 cm]
				\node at (0.5,-1) {$V_1^{(1)}V_3^{(1)} $};
				\draw[dashed] (0,0) -- (0,1) -- (1,1) -- (1,0) -- cycle;
				
				\foreach \x/\y/\n in {0/0/1, 1/0/2, 1/1/3, 0/1/4}{
					\node[draw,circle,fill=black,inner sep=0pt,minimum size=4pt] at (\x,\y) {};
				}
				
				\node[draw,rectangle,fill=blue,inner sep=0pt,minimum size=6pt] at (0,0) {};
				
				\node[draw,rectangle,fill=blue,inner sep=0pt,minimum size=6pt] at (1,1) {};
				
				\node at (0,0) [below left] {1};
				\node at (0,1) [above left] {2};
				\node at (1,1) [above right] {3};
				\node at (1,0) [below right] {4};
				
			\end{scope}
			
			
			\begin{scope}[xshift=2.4cm]

				\node at (0.5,-1) {$V_1^{(1)}P_2^{(1)}L_1^{(2)} $};
				\draw[dashed] (0,0) -- (0,1) -- (1,1) -- (1,0) -- cycle;
				\fill[blue, opacity=0.5] (0,0) -- (0,1) -- (1,1) -- cycle;
				
				\foreach \x/\y/\n in {0/0/1, 1/0/2, 1/1/3, 0/1/4}{
					\node[draw,circle,fill=black,inner sep=0pt,minimum size=4pt] at (\x,\y) {};
				}
				
				\node[draw,rectangle,fill=blue,inner sep=0pt,minimum size=6pt] at (0,0) {};
				
				\node[draw,rectangle ,fill=red,inner sep=2pt,minimum size=6pt] at (0,0.5) {};

				\node at (0,0) [below left] {1};
				\node at (0,1) [above left] {2};
				\node at (1,1) [above right] {3};
				\node at (1,0) [below right] {4};
				
			\end{scope}
			
			
			\begin{scope}[xshift=4.8cm]

				\node at (0.5,-1) {$P_2^{(1)}V_2^{(2)} $};
				\draw[dashed] (0,0) -- (0,1) -- (1,1) -- (1,0) -- cycle;
				\fill[blue, opacity=0.5] (0,0) -- (0,1) -- (1,1) -- cycle;
				
				\foreach \x/\y/\n in {0/0/1, 1/0/2, 1/1/3, 0/1/4}{
					\node[draw,circle,fill=black,inner sep=0pt,minimum size=4pt] at (\x,\y) {};
				}
				
				
				\node[draw,rectangle ,fill=red,inner sep=2pt,minimum size=6pt] at (0,1) {};

				\node at (0,0) [below left] {1};
				\node at (0,1) [above left] {2};
				\node at (1,1) [above right] {3};
				\node at (1,0) [below right] {4};
				
			\end{scope}
			
			
			\begin{scope}[xshift=7.2cm]

				\node at (0.5,-1) {$L_1^{(1)}L_3^{(1)}L_2^{(2)}L_4^{(2)} $};
				\draw[dashed] (0,0) -- (0,1) -- (1,1) -- (1,0) -- cycle;

				\foreach \x/\y/\n in {0/0/1, 1/0/2, 1/1/3, 0/1/4}{
					\node[draw,circle,fill=black,inner sep=0pt,minimum size=4pt] at (\x,\y) {};
				}
				
				
				\node[draw,rectangle ,fill=red,inner sep=2pt,minimum size=6pt] at (0.5,0) {};
				\node[draw,rectangle ,fill=red,inner sep=2pt,minimum size=6pt] at (0.5,1) {};
				\node[draw,rectangle ,fill=blue,inner sep=2pt,minimum size=6pt] at (0,0.5) {};
				\node[draw,rectangle ,fill=blue,inner sep=2pt,minimum size=6pt] at (1,0.5) {};

				\node at (0,0) [below left] {1};
				\node at (0,1) [above left] {2};
				\node at (1,1) [above right] {3};
				\node at (1,0) [below right] {4};
				
			\end{scope}

		\end{tikzpicture}
	\end{center}
	\begin{center}
		\begin{tikzpicture}[]

			\begin{scope}[xshift=0cm]

				\node at (0.6,-1) {$L(1,2)V_2^{(1)}P_2^{(1)} $};
				\draw[dashed] (0,0) -- (0,1) -- (1,1) -- (1,0) -- cycle;
				\fill[blue, opacity=0.5] (0,0) -- (0,1) -- (1,1) -- cycle;
				
				\foreach \x/\y/\n in {0/0/1, 1/0/2, 1/1/3, 0/1/4}{
					\node[draw,circle,fill=black,inner sep=0pt,minimum size=4pt] at (\x,\y) {};
				}
				
				\node[draw,rectangle,fill=blue,inner sep=0pt,minimum size=6pt] at (0,1) {};
				
				\node[draw,rectangle ,fill=red,inner sep=2pt,minimum size=6pt] at (0.3,0.7) {};

				\node at (0,0) [below left] {1};
				\node at (0,1) [above left] {2};
				\node at (1,1) [above right] {3};
				\node at (1,0) [below right] {4};
				
			\end{scope}


			\begin{scope}[xshift=3 cm]

				\node at (0.6,-1) {$L(1,2)V_1^{(1)}P_2^{(1)} $};
				\draw[dashed] (0,0) -- (0,1) -- (1,1) -- (1,0) -- cycle;
				\fill[blue, opacity=0.5] (0,0) -- (0,1) -- (1,1) -- cycle;
				
				\foreach \x/\y/\n in {0/0/1, 1/0/2, 1/1/3, 0/1/4}{
					\node[draw,circle,fill=black,inner sep=0pt,minimum size=4pt] at (\x,\y) {};
				}
				
				\node[draw,rectangle,fill=blue,inner sep=0pt,minimum size=6pt] at (0,0) {};
				
				\node[draw,rectangle ,fill=red,inner sep=2pt,minimum size=6pt] at (0.3,0.7) {};

				\node at (0,0) [below left] {1};
				\node at (0,1) [above left] {2};
				\node at (1,1) [above right] {3};
				\node at (1,0) [below right] {4};
				
			\end{scope}

		\end{tikzpicture}
	\end{center}
	as well as all the configurations in their orbits under the symmetry group, see Table \ref{table_codim_four}. Here we follow the pictorial depiction of Schubert conditions introduced in \cite{Ranestad:2024svp}. The bracket $\langle ABCD \rangle$ reduces to $-\langle AB24 \rangle \langle CD 13 \rangle$ in the first three configurations and $-\langle AB24 \rangle \langle CD 13 \rangle - \langle AB13 \rangle \langle CD 24 \rangle $ in the fourth one. Equation (\ref{ABCD}) implies that the intersection of these complex strata with $\mathcal{A}^{(2)}_4$ is empty. On the variety $L(1,2)V_2^{(1)}P_2^{(1)}$  we have instead
	\begin{equation}\label{residual_by_drop}
		\langle ABCD \rangle =\langle AB 34 \rangle \langle CD 12 \rangle + \langle AB 14 \rangle \langle CD 23 \rangle = 0 \ .
	\end{equation}
	It follows that the intersection of this four-dimensional stratum with $\mathcal{A}^{(2)}_4$ is three-dimensional and therefore residual. An analogous situation happens to $L(1,2)V_1^{(1)}P_2^{(1)}$, as well as to the following three-dimensional  complex strata:
	\begin{center}
		\begin{tikzpicture}[]

			\begin{scope}[xshift=0 cm]
				\node at (0.5,-1) {$L(1,2)V_1^{(1)}V_3^{(2)} $};
				\draw[dashed] (0,0) -- (0,1) -- (1,1) -- (1,0) -- cycle;
				
				\foreach \x/\y/\n in {0/0/1, 1/0/2, 1/1/3, 0/1/4}{
					\node[draw,circle,fill=black,inner sep=0pt,minimum size=4pt] at (\x,\y) {};
				}
				
				\node[draw,rectangle,fill=blue,inner sep=0pt,minimum size=6pt] at (0,0) {};
				
				\node[draw,rectangle,fill=red,inner sep=0pt,minimum size=6pt] at (1,1) {};
				
				\node[draw,rectangle,fill=blue,inner sep=0pt,minimum size=6pt] at (0.45,0.45) {};
				
				\node[draw,rectangle,fill=red,inner sep=0pt,minimum size=6pt] at (0.55,0.55) {};
				
				\node at (0,0) [below left] {1};
				\node at (0,1) [above left] {2};
				\node at (1,1) [above right] {3};
				\node at (1,0) [below right] {4};
				
			\end{scope}
			
			
			\begin{scope}[xshift=4cm]

				\node at (0.5,-1) {$L(1,2)P_1^{(1)}P_3^{(2)} $};
				\draw[dashed] (0,0) -- (0,1) -- (1,1) -- (1,0) -- cycle;
				\fill[blue, opacity=0.5] (1,0) -- (0,0) -- (0,1) -- cycle;
				\fill[red, opacity=0.5] (1,0) -- (1,1) -- (0,1) -- cycle;
				
				\foreach \x/\y/\n in {0/0/1, 1/0/2, 1/1/3, 0/1/4}{
					\node[draw,circle,fill=black,inner sep=0pt,minimum size=4pt] at (\x,\y) {};
				}
				
				\node[draw,rectangle,fill=blue,inner sep=0pt,minimum size=6pt] at (0.45,0.45) {};
				
				\node[draw,rectangle,fill=red,inner sep=0pt,minimum size=6pt] at (0.55,0.55) {};

				\node at (0,0) [below left] {1};
				\node at (0,1) [above left] {2};
				\node at (1,1) [above right] {3};
				\node at (1,0) [below right] {4};
				
			\end{scope}

		\end{tikzpicture}
	\end{center}
	All other strata of the residual arrangement are contained in the ones above and can be found in Appendix \ref{appendix_lists}.
	The residual arrangement can also be calculated directly using the \texttt{RegularChains} \cite{lemaire2005regularchains} library in \texttt{Maple}. We used \texttt{LazyRealTriangularize} \cite{Chen2013} to determine the residual arrangement and the inequality description of boundaries' interior.

	\section{Topology}
	\label{sec:Topology}

	Now we turn our attention to the study of the real topology of $\mathcal{A}^{(2)}_4$. While the interior of $\mathcal{A}^{(1)}_4$ is homeomorphic to an open ball \cite{Galashin_2022}, this is no longer true at two loops.
	
	\begin{thm}\label{theorem_topology}
		The Euclidean interior ${\rm int}(\mathcal{A}^{(2)}_4)$ of $\mathcal{A}^{(2)}_4$ is connected and its fundamental group is free of rank one\footnote{We thank Thomas Lam for pointing this out to us.}.
	\end{thm}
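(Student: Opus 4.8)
The plan is to exhibit $\mathrm{int}(\mathcal{A}^{(2)}_4)$ as the total space of a fiber bundle over the contractible interior of the one-loop Amplituhedron, with fiber an open annulus. By (\ref{two_loop_Ampl}) and (\ref{line_points}), a point of $\mathrm{int}(\mathcal{A}^{(2)}_4)$ is a quadruple $(A,B,C,D)$ with $A,C$ in the open triangle $T_1$, $B,D$ in the open triangle $T_2$, subject to the single strict inequality $\langle ABCD\rangle=\det[A,B,C,D]>0$ from (\ref{ABCD}). First I would project to the first line, $\pi\colon (A,B,C,D)\mapsto(A,B)$; its image is $T_1\times T_2\cong\mathrm{int}(\mathcal{A}^{(1)}_4)$, which is homeomorphic to an open ball by \cite{Galashin_2022} and in particular contractible. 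It then suffices to identify the fiber and to check that $\pi$ is a locally trivial fibration, since a bundle over a contractible base is homotopy equivalent to its fiber.

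To understand the fiber $F_{A,B}=\{(C,D):\langle ABCD\rangle>0\}$, note that $\det[A,B,A,D]=\det[A,B,C,B]=0$, so the bi-affine function $(C,D)\mapsto\det[A,B,C,D]$ is in fact a pure bilinear form $\beta(C,D)=(C-A)^{\top}M_{A,B}\,(D-B)$ in the planar coordinates of $C-A$ and $D-B$, for a $2\times2$ matrix $M_{A,B}$. Writing $C-A=r_C\,u(\phi)$ and $D-B=r_D\,u(\psi)$ in polar coordinates about $A$ and $B$, with $u(\theta)=(\cos\theta,\sin\theta)$, the condition becomes $r_C r_D\,u(\phi)^{\top}M_{A,B}\,u(\psi)>0$, which for $r_C,r_D>0$ depends only on the angles. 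The radial ranges $0<r_C<\rho_1(\phi)$ and $0<r_D<\rho_2(\psi)$, i.e. the distances to the boundaries of $T_1,T_2$, are continuous and positive because $A$ and $B$ lie in the open triangles, so $F_{A,B}$ deformation retracts onto the angular locus $\Omega_{A,B}=\{(\phi,\psi)\in S^1\times S^1:u(\phi)^{\top}M_{A,B}\,u(\psi)>0\}$. The key step is to show that $M_{A,B}$ is nonsingular on the whole open base: a short computation shows that $\det M_{A,B}=0$ forces $A$ or $B$ to lie on the line $(24)$, the intersection of the two planes $(-412)$ and $(234)$ carrying $T_1$ and $T_2$, but this line meets each triangle only along a boundary edge. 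Hence $M_{A,B}$ is invertible for $A,B$ interior, for each fixed $\phi$ the admissible set of $\psi$ is an open half-circle whose center $\alpha_{A,B}(\phi)$ winds exactly once as $\phi$ traverses $S^1$, and $\Omega_{A,B}$ is an open annulus, so $F_{A,B}\simeq S^1$.

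Finally I would assemble the fibers. Since $M_{A,B}$ depends continuously on $(A,B)$ and is everywhere invertible, the explicit trivialization $(\phi,\psi,r_C,r_D)\mapsto(\phi,\psi-\alpha_{A,B}(\phi),r_C,r_D)$ turns $\pi$ into a locally trivial bundle with fiber an open annulus; over the contractible base this yields $\mathrm{int}(\mathcal{A}^{(2)}_4)\simeq S^1$, whence it is connected and $\pi_1\cong\mathbb{Z}$, the free group of rank one. The main obstacle is precisely the verification that $\pi$ is genuinely a fibration rather than merely having fibers of constant homotopy type: this rests on the nonsingularity lemma for $M_{A,B}$ and on checking that the radial bounds $\rho_1,\rho_2$ and the angular centering $\alpha_{A,B}$ vary continuously, so that the trivialization is well defined over all of $\mathrm{int}(\mathcal{A}^{(1)}_4)$. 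One should also confirm that the winding number of $\alpha_{A,B}$ is $\pm1$, so that $\Omega_{A,B}$ is a single annulus and not a disconnected locus; this again follows from the invertibility of $M_{A,B}$.
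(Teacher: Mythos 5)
Your proposal is correct and follows the paper's top-level strategy --- fibering $\mathrm{int}(\mathcal{A}^{(2)}_4)$ over the contractible $\mathrm{int}(\mathcal{A}^{(1)}_4)$ by forgetting $CD$ --- but your analysis of the fiber is genuinely different and in fact sharper. The paper treats the fiber $\mathcal{F}=\{CD:\langle ABCD\rangle>0\}$ by an ad hoc path argument for connectedness and a winding-number argument for $\pi_1$: it shows that the windings of $C(t)$ about $A$ and of $D(t)$ about $B$ agree and that loops of nonzero winding are non-contractible, which exhibits a nontrivial map $\pi_1(\mathcal{F})\to\mathbb{Z}$ but does not address why winding-zero loops are contractible, i.e.\ why $\pi_1(\mathcal{F})$ is not larger than $\mathbb{Z}$. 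You instead observe that $\langle ABCD\rangle=\det[A,B,C-A,D-B]$ is a bilinear form in the planar displacements, prove the key nondegeneracy lemma (degeneracy of $M_{A,B}$ forces $A$ or $B$ onto the line $(24)=W_1\cap W_2$, which meets each triangle only in an edge), and conclude that the fiber deformation retracts onto an open annulus; this identifies $\pi_1(\mathcal{F})\cong\mathbb{Z}$ outright and, via the shear $(\phi,\psi)\mapsto(\phi,\psi-\alpha_{A,B}(\phi))$, also supplies the explicit (global) trivialization of the projection, a point the paper asserts without argument. Two small things to nail down if you write this up: fix once and for all the affine normalizations of $A,B,C,D$ in the planes $(-412)$ and $(234)$ so that $\det[A,B,C,D]$ is a well-defined bi-affine function of the planar coordinates, and in the nonsingularity computation treat separately the degenerate case where $u\in\mathrm{span}(A,B)$ (it again forces a point onto $(24)$). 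Neither affects the conclusion.
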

	\begin{proof}
		Let us start by observing that ${\rm int}(\mathcal{A}^{(2)}_4)$ is a fiber bundle over ${\rm int}(\mathcal{A}^{(1)}_4)$ via the projection map on the first component $AB$, with fibers homeomorphic to 
		\begin{equation}\label{fiber}
			\mathcal{F}:= \{CD \in {\rm int}(\mathcal{A}^{(1)}_4) : \langle AB CD  \rangle > 0 \} \ ,
		\end{equation}
		for any $AB \in {\rm int}(\mathcal{A}^{(1)}_4)$. Since ${\rm int}(\mathcal{A}^{(1)}_4)$ is contractible \cite{Galashin_2022}, the fiber bundle is trivial and therefore, if $\mathcal{F}$ is connected so is ${\rm int}(\mathcal{A}^{(2)}_4)$. Moreover, the fundamental group of ${\rm int}(\mathcal{A}^{(2)}_4)$ is isomorphic to that of $\mathcal{F}$. 
		
		Let us start by showing that $\mathcal{F}$ is connected. We want to find a path between two points $C_iD_i \in \mathcal{F}$ for $i=1,2$. We choose  $A$, $C_i$ to lie on $T_1$ and $B$, $D_i$ on $T_2$, see Figure~\ref{two_triangles}. A path $C(t)D(t)$ with $t \in [0,1]$ from $C_1D_1$ to $C_2D_2$ draws paths $C(t)$ in $T_1$ and $D(t)$ in $T_2$. Let us take a path $C(t)$ connecting $C_1$ to $C_2$. For each point $C(t)$ we can see that the allowed region for $D(t)$ is determined by $\langle ABC(t)D(t) \rangle >0$, which carves out one of the two connected components of the complement of $T_2$ by the line $L(t):=(ABC(t))\cap(234)$. The latter is drawn in red in Figure~\ref{two_triangles}. By continuity there always exists a path $D(t)$ from $D_1$ to some point $D(1)$ such that $C(t)D(t)$ always lies in $\mathcal{F}$. It is now easy to see that $D(1)$ and $D_2$ lie in the same connected component of the complement of $T_2$ by $L(1)$, and hence $C_2 D(1)$ and $C_2 D_2$ can be connected by a path in $\mathcal{F}$.
		
		We now study at the fundamental group of $\mathcal{F}$.
		As before, a loop $\gamma(t)=C(t)D(t)$ in $\mathcal{F}$ draws loops $C(t)$ in $T_1$ and $D(t)$ in $T_2$. A homotopy between loops in $\mathcal{F}$ yields homotopies between the respective loops in $T_1$ and $T_2$. Assume that $C(t)$ has winding number $w\in \mathbb{Z}$ around $A$. Since $D(t)$ must always lie on the same side of $L(t)$ defined as in the previous paragraph, it is clear that the winding of $D(t)$ around $B$ must be $w$ as well. Finally, if $w \neq 0$ then $\gamma(t)$ is not contractible, because $C(t)$, $D(t)$ cannot be equal to $A$, $B$ respectively because of (\ref{fiber}). 
	\end{proof}

	\subsection{Towards the CW-structure}\label{section_CW_structure}

	On the positive Grassmannian there is a regular CW structure given by the positroid stratification \cite{postnikov2006}. More precisely, each $d$-dimensional (complex) stratum in $\partial_a \mathcal{A}^{(1)}_4$ intersects $\mathcal{A}^{(1)}_4$ in a (real) $d$-dimensional set, which is in fact a (closed) positroid cell. We can therefore endow $\mathcal{A}^{(1)}_4$ with a regular CW structure directly from the algebraic stratification. 
	
	When considering $\mathcal{A}^{(2)}_4$, there are three main aspects we want to highlight: the nontrivial topology of the interior, the non-connectedness of some faces (see also \cite{Franco:2014csa}) and the presence of multiple regions, see Definition \ref{definition_regions}.  Finally, we explain how one may give a CW structure to $\mathcal{A}^{(2)}_4$ starting from its face stratification.
	
	Let us begin with the first point. By Theorem \ref{theorem_topology}, to equip $\mathcal{A}^{(2)}_4$ with a CW structure, we must divide its interior into at least two cells. The following lemma demonstrates that two cells suffice, as shown by a straightforward application of the argument in the proof of Theorem \ref{theorem_topology}.

	\begin{lemma}\label{lemma_top_cells}
		The two regions in ${\rm int}(\mathcal{A}^{(2)}_4)$ defined by $\{ \Delta_i^{24}>0 \}$ and $\{ \Delta_i^{24}<0 \}$ for any fixed $i \in \{1,\dots,4\}$ are both contractible.
	\end{lemma}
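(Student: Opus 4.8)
The two regions are interchanged by the $\mathbb{Z}_2$ symmetry swapping $AB$ with $CD$, under which $\Delta_i^{24}\mapsto-\Delta_i^{24}$, while the four indices $i$ are permuted by the $D_4$ action on the $Z_i$; hence by the $\mathbb{Z}_2\times D_4$ symmetry it suffices to treat the single region $R^+:={\rm int}(\mathcal{A}^{(2)}_4)\cap\{\Delta_2^{24}>0\}$. The plan is to reuse the fibration of Theorem \ref{theorem_topology}. The projection onto the first line $AB$ exhibits $R^+$, exactly as in that proof, as a bundle over the contractible base ${\rm int}(\mathcal{A}^{(1)}_4)$ with fibre
\[
\mathcal{F}^+=\{CD\in{\rm int}(\mathcal{A}^{(1)}_4):\langle ABCD\rangle>0,\ \Delta_2^{24}>0\}=\mathcal{F}\cap\{\Delta_2^{24}>0\},
\]
so $R^+$ is contractible as soon as $\mathcal{F}^+$ is. By (\ref{Delta24_2}) we have $\Delta_2^{24}=-\langle AB2D\rangle$, whose vanishing locus inside $T_2$ is the line $(2B)$ through the vertex $Z_2$ and the point $B=AB\cap(234)$; thus $\{\Delta_2^{24}>0\}$ confines $D$ to one open side of the line $(2B)$, which passes through $B$.

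To analyse $\mathcal{F}^+$ I identify its points with pairs $(C,D)\in{\rm int}(T_1)\times{\rm int}(T_2)$ via (\ref{line_points}). For fixed $D$ the bracket $\langle ABCD\rangle$ is linear in $C$, so the admissible set of $C$ is the intersection of the triangle ${\rm int}(T_1)$ with an open half-plane, hence bounded, open and convex. Forgetting $C$ therefore gives a map $\mathcal{F}^+\to\mathcal{D}^+$ onto
\[
\mathcal{D}^+=\{D\in{\rm int}(T_2):\Delta_2^{24}>0\ \text{and}\ \exists\,C\in{\rm int}(T_1),\ \langle ABCD\rangle>0\}
\]
with nonempty convex fibres. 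Taking the fibrewise centroid produces a continuous section, and the fibrewise straight-line homotopy stays in each convex fibre, yielding a deformation retraction $\mathcal{F}^+\simeq\mathcal{D}^+$. It thus remains to show that the planar region $\mathcal{D}^+\subset{\rm int}(T_2)$ is contractible.

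Since $\langle ABCD\rangle$ is linear in $C$, the existence of an admissible $C\in{\rm int}(T_1)$ is equivalent to $\langle ABVD\rangle>0$ for some vertex $V\in\{-Z_4,Z_1,Z_2\}$ of $T_1$. On $\{\Delta_2^{24}>0\}$ the vertex $Z_2$ gives $\langle AB2D\rangle=-\Delta_2^{24}<0$ and drops out, so $\mathcal{D}^+$ is the union of the two convex sets $\{\Delta_2^{24}>0\}\cap\{\langle AB4D\rangle<0\}$ and $\{\Delta_2^{24}>0\}\cap\{\langle AB1D\rangle>0\}$ inside ${\rm int}(T_2)$. All three bounding lines pass through $B$, and $\mathcal{D}^+$ lies in the single half-plane cut out by $(2B)$, so geometrically $\mathcal{D}^+$ is a union of angular sectors at $B$ of total opening less than $\pi$. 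Being a connected union of two convex open sets it is simply connected (by van Kampen, their intersection being convex hence connected), and an open simply connected subset of the plane is contractible. Connectedness of $\mathcal{D}^+$, equivalently of $\mathcal{F}^+$, follows from the path construction in Theorem \ref{theorem_topology}, now run while keeping $D$ on the fixed side of $(2B)$.

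The main obstacle is the upgrade from simple connectivity to genuine contractibility. The winding argument of Theorem \ref{theorem_topology} already gives simple connectivity for free: since $D$ is trapped on one side of the line $(2B)$ through $B$, its winding number about $B$ is forced to vanish, killing the generator of $\pi_1(\mathcal{F})$. What this argument does not directly yield is contractibility, and supplying it is precisely the role of the reduction $R^+\simeq\mathcal{F}^+\simeq\mathcal{D}^+$ to a planar region, where simple connectivity and contractibility coincide. The residual points are routine and parallel Theorem \ref{theorem_topology}: local triviality of the $AB$-bundle, continuity of the centroid section, and connectedness of $\mathcal{D}^+$.
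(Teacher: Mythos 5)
Your proposal is essentially correct, and it is worth saying up front that it supplies strictly more than the paper does: the paper states Lemma \ref{lemma_top_cells} with only the remark that it is ``a straightforward application of the argument in the proof of Theorem \ref{theorem_topology}'', but that argument as written only controls $\pi_1$ of the fiber, and you correctly identify that the real content of the lemma is the upgrade from simple connectivity to contractibility. Your route --- fiber over $AB$, then over $D$, reduce to a planar region $\mathcal{D}^+$, and use that simply connected open planar sets are contractible --- is a legitimate way to supply that missing step. Two small remarks on the reduction: the full $D_4$ action sends $\Delta^{24}_i$ to $\pm\Delta^{13}_j$, so strictly you should invoke the stabilizer of the chord $(24)$ in $D_4$ (which still acts transitively on the four indices $i$, and for $i\in\{1,4\}$ the constraint lands on $C$ rather than $D$, with the roles of the two triangles exchanged); this is harmless but worth a sentence.

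The one step that needs repair is your justification of the connectedness of $\mathcal{D}^+$ by ``the path construction of Theorem \ref{theorem_topology}, now run while keeping $D$ on the fixed side of $(2B)$''. That construction drags $D$ along an \emph{arbitrary} path $C(t)$ in ${\rm int}(T_1)$, and with the extra constraint $\Delta_2^{24}>0$ the allowed $D$-region can become empty: writing $C=(1-s)A+sZ_2$ on the segment between $A$ and $Z_2$, one finds $\langle ABCD\rangle = s\,\langle AB2D\rangle$, so the two half-planes $\{\langle ABCD\rangle>0\}$ and $\{\langle AB2D\rangle<0\}$ are complementary and the fiber over such $C$ is empty. So a path crossing that segment breaks the argument (it can be fixed by avoiding the segment, whose complement in ${\rm int}(T_1)$ is still path-connected). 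A cleaner fix makes the whole tail of your proof collapse: expand $0=\langle ABAD\rangle$ in the barycentric coordinates of $A$ from (\ref{A_on_triangle}), giving $\alpha\,\langle AB(-4)D\rangle+\beta\,\langle AB1D\rangle+\gamma\,\langle AB2D\rangle=0$ with $\alpha,\beta,\gamma>0$. On $\{\langle AB2D\rangle<0\}$ this forces one of the other two vertices of $T_1$ to be admissible, so your $U\cup V$ is all of ${\rm int}(T_2)\cap\{\Delta_2^{24}>0\}$, i.e. $\mathcal{D}^+$ is the intersection of an open triangle with an open half-plane and hence convex. This makes connectedness, the van Kampen step, and the appeal to planar simple connectivity unnecessary. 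The remaining points you flag as routine (local triviality of the $AB$-projection, continuity of the centroid section) are at the same level of rigor as the paper's own proof of Theorem \ref{theorem_topology} and are acceptable.
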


	The second distinctive feature at two loops is the presence of disconnected faces. Their number is nevertheless very limited and they appear only at codimension two and three, see Table \ref{table_codim_two} and \ref{table_codim_three}. Finding the connected components of a real semialgebraic set is a famously challenging problem. It is in principle solved by calculating a \emph{cylindrical algebraic decomposition} (CAD) \cite{Collins1975}. The CAD yields a decomposition into disjoint cells that are ``cylindrical" (see \cite{Collins1975} for precise meaning). Each cell is homeomorphic to an open cube and therefore connected. If top-dimensional cells are connected by lower-dimensional cells, they are a part of the same connected component. We used the CAD implementation in the \texttt{Maple} library \texttt{RegularChains} \cite{lemaire2005regularchains} for determining the number of connected components of boundaries. We now present the disconnected boundaries of $\mathcal{A}^{(2)}_4$. In the following we denote a face by the same expression as that of the stratum it is associated to. Then, the only disconnected faces at codimension two are those in the orbit of $L^{(1)}_1L^{(1)}_3$, on which (\ref{factorised_ABCD}) reduces to
	\begin{equation}
		\langle ABCD \rangle = \frac{\Delta^{24}_2 \Delta^{24}_4}{\langle AB24 \rangle \langle CD24\rangle} - \frac{\langle AB24\rangle}{\langle CD24 \rangle} \langle CD12\rangle \langle CD34\rangle \geq 0 \ .
	\end{equation}
	In the plane $(\Delta^{24}_2,\Delta^{24}_4)$ the face $L^{(1)}_1L^{(1)}_3$ consists of two connected components bounded by the hyperbola $\langle ABCD \rangle =0$, which corresponds to $L(1,2)L^{(1)}_1L^{(1)}_3$. The face associated to the latter also has two connected components, corresponding to the two branches of the hyperbola. It turns out that these are the only disconnected (closed) faces. There are however faces whose interior is disconnected, which we also discuss now.
	
	The last point we want to address is that of multiple regions, which is tightly related to the presence of \textit{internal boundaries} \cite{Dian_2023}.
	\begin{dfn}\label{definition_regions}
		We define a \textit{region} of a $d$-dimensional face $S_{\geq 0}$ with $d>0$ to be a connected component of the complement in $S_{\geq 0}$ by the union of all $(d-1)$-dimensional (closed) faces.
	\end{dfn}

	We now give some examples. Consider the codimension two face $L^{(1)}_1L^{(2)}_1$, on which $\langle ABCD \rangle \geq 0$ reduces to $\Delta^{24}_2 \Delta^{24}_4 \geq 0$. Its interior has two connected components which can be visualized in the plane $(\Delta^{24}_2,\Delta^{24}_4)$ as the union of the two sign regions $(++)$ and $(--)$. Its boundary is given by the intersection with $L(1,2)$ and yields the union of the two coordinate axis, $\Delta^{24}_2 \Delta_{4}^{24}=0$, which are the faces $P_1(1,2)$ and $V_1(1,2)$. The latter are further separated by the boundary $P_1(1,2)V_1(1,2)$, which corresponds to the origin $\Delta^{24}_2 = \Delta_{4}^{24}=0$. Therefore, $L^{(1)}_1L^{(2)}_1$, $P_1(1,2)$ and $V_1(1,2)$ all have two regions.
	
	As the second example, we take the codimension four face $V^{(1)}_1V^{(2)}_1$. On this we have $\langle ABCD \rangle = 0$
	but $\Delta^{24}_2, \Delta^{24}_3 \neq 0$. In particular, if we look at the plane $(\Delta^{24}_2, \Delta^{24}_3)$, we see that the face consists of all four sign regions. The face has four regions, separated by two higher codimension boundaries corresponding to the two coordinate axes, which are $P_i(1,2)V^{(1)}_1V^{(2)}_1 $ and for $i=2,4$ respectively. The latter are in turn separated by  $V_1^{(12)}$ into two regions each. An analogous discussion applies for the face $P^{(1)}_1P^{(2)}_1$.
	
	The highest codimension faces with multiple regions have codimension six. For example the face $P_4(1,2)V_1^{(1)}P_2^{(1)}V_1^{(2)}$ is separated by the $V_1^{(12)}P_2^{(1)}P_2^{(2)}$ into two regions, characterised by the sign of $\Delta^{24}_3$.
	
	These examples show that in order to endow $\mathcal{A}^{(2)}_4$ with a regular CW structure from its face stratification, one has to subdivide some faces into several cells. Let us mention the minimal requirements for such a procedure. There must be at least two eight-dimensional cells according to Lemma \ref{lemma_top_cells}. The same applies to the disconnected faces presented above. The zero-dimensional cells are the 34 vertices of Table \ref{table_codim_eight}. All other faces must be decomposed in at least as many (open) cells as their regions. We find that all these can be characterised by the sign regions of $\Delta^{24}_i$ or $\Delta^{13}_i$ for $i=1,\dots,4$, see (\ref{Deltas}). We enumerate the regions of each face in the tables of Section \ref{appendix_lists}. All regions are easily verified to be contractible\footnote{One can also analyze this through explicit parametrisations.}. We also provide the poset of faces in \cite{code}.
	This will help in constructing a regular CW structure on $\mathcal{A}^{(2)}_4$ associated to its face stratification.

	\section{Unique adjoint}
	\label{sec:adjoint}

	In this section we generalize the result in \cite{Ranestad:2024svp} to the two-loop Amplituhedron, i.e. we show that the adjoint is uniquely determined by its (bi)degree and the fact that it interpolates the residual arrangement.

	\subsection{The adjoint of the loop Amplituhedron}
	
	The $L$-loop $n$-point Amplituhedron $\mathcal{A}^{(L)}_n$ for $n \geq 4$ and $L \geq 1$, is conjectured to be a (weighted) positive geometry. More precisely, we expect that there exists a unique meromorphic top-form on $\Gr(2,4)^{L}$ given by
	\begin{equation}\label{canonical_form}
		\Omega_{n}^{(L)} = \prod_{\ell=1}^{L} \langle A_{\ell}B_{\ell} dA_{\ell}dA_{\ell}\rangle \langle A_{\ell}B_{\ell} d B_{\ell}dB_{\ell}\rangle \cdot \underline{\Omega}_{n}^{(L)} \ ,
	\end{equation}
	where $\underline{\Omega}_{n}^{(L)}$ is a rational function in twistor coordinates of homogeneous degree zero in the $Z_i$'s and of degree $-4$ in each $A_{\ell}B_{\ell}$ and $dA_{\ell}$ and $dB_{\ell}$ are the differential of $A_l$ and $B_l$ respectively. The form (\ref{canonical_form}) is required to have logarithmic poles only on the boundary of $\mathcal{A}^{(L)}_n$ and therefore we can write
	\begin{equation}\label{integrand_ansatz}
		\underline{\Omega}_{n}^{(L)} = N_n^{(L)}(Z_1,\dots,Z_n)\frac{\mathcal{N}_{n}^{(L)}\bigl(A_1B_1 , \dots , A_LB_L,Z_1,\dots,Z_n\bigl)}{\prod_{\ell=1}^{L} \prod_{i=1}^{n} \bigl\langle A_{\ell}B_{\ell}  ii+1 \bigr\rangle \prod_{\ell ' < \ell} \bigl\langle A_{\ell'}B_{\ell'}  A_{\ell}B_{\ell} \bigl\rangle } \ ,
	\end{equation}
	where the numerator $\mathcal{N}_n^{(L)}$ is a homogeneous polynomial in the twistor coordinates of $A_{\ell}B_{\ell}$ of degree $n+L-1-4=n-5+L$ in each $A_{\ell}B_{\ell}$ and $Z$ matrix minors $ \langle ijkl\rangle :=\det(Z_i Z_j Z_k Z_l)$. $N_n^{(L)} $ instead is a polynomial that does not depend on twistor coordinates. We call $\mathcal{N}_n^{(L)}$ the \textit{adjoint polynomial} of $\mathcal{A}^{(L)}_n$. Note that by the symmetries of $\mathcal{A}^{(L)}_n$, it follows that $ N_n^{(L)}
	\mathcal{N}_n^{(L)}$ is invariant under the dihedral group of order $n$ acting on the $Z_i$'s, while $\mathcal{N}_n^{(L)}$ is symmetric under the action of the symmetric group $S_L$ permuting the loop variables $A_{\ell}B_{\ell}$. In particular, $\mathcal{N}_n^{(L)}$ belongs to $\mathcal{R}^{(L)}_n:=S^{L}R_{n+L-5}$,
	the symmetric $L$-fold tensor product of $R_{n+L-5}$, where $R_k$ denotes the degree-$k$ part of the coordinate ring of $\Gr(2,4)$. We denote by $d(n,L)$ the complex dimension regarded as a linear space. Then, $d(n,L)-1$ counts the degrees of freedom of the polynomial $\mathcal{N}^{(L)}_n$, up to an overall constant factor. We have that \cite[Eq. 5.1]{Ranestad:2024svp}
	\begin{equation}
		d(n,1) = \binom{n+1}{5} - \binom{n-1}{5} = 2 \binom{n}{4} - \binom{n-1}{3} \ , \quad n \geq 4 \ .
	\end{equation}
	Therefore,
	\begin{equation}
		d(n,L)= \binom{d(n+L-1,1)+L-1}{L}\;, \qquad n \geq 4 \ , \ L \geq 1 \ .
	\end{equation}
	\begin{table}[h!] 
		\centering
		\begin{tabular}{|c|c|c|c|} 
			\hline
			$n \setminus L$ &  1 & 2 & 3   \\
			\hline 
			4 & 0 & \textbf{20} & 1539  \\
			\hline 
			5 & 5 & 209 & 22099   \\
			\hline 
			6 & 19 & 1274 & 198484  \\
			\hline
		\end{tabular}
		\caption{The number of degrees of freedom of $\mathcal{N}^{(L)}_n$ up to an overall multiplicative constant, that is $d(n,L)-1$. For our case of interest, $n=4$ and $L=2$, this number is 20.}
		\label{table:adjoint_dof}
	\end{table}

	\subsection{Interpolating the residual arrangement}
	
	As promised, we now prove the following.
	
	\begin{thm}
		There exists an up to scaling unique bi-homogeneous polynomial $\mathcal{N}^{(2)}_4 \in \mathcal{R}^{(2)}_4$ of bidegree $(1,1)$ interpolating the residual arrangement of $\mathcal{A}^{(2)}_4$.
	\end{thm}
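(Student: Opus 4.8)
The plan is to read the statement as a linear interpolation problem on the $21$-dimensional space $\mathcal{R}^{(2)}_4 = S^2R_1$, where $R_1$ is the $6$-dimensional span of the Plücker coordinates of $\Gr(2,4)$, and to show that requiring vanishing on the residual arrangement cuts it down to a line. Writing a general element as $\mathcal{N} = \sum_{I,J} c_{IJ}\,\langle AB\,I\rangle\langle CD\,J\rangle$, with $I,J$ ranging over the six pairs and $(c_{IJ})$ a symmetric $6\times 6$ matrix (symmetry encoding invariance under $AB\leftrightarrow CD$), I would first record that $\dim_{\mathbb{C}}\mathcal{R}^{(2)}_4 = \binom{6+1}{2} = 21$, matching Table~\ref{table:adjoint_dof}, so there are $20$ projective degrees of freedom. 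Since every residual stratum is contained in one of the maximal components listed in Section~\ref{sec:Real}, it suffices to impose vanishing on those, and by the $\mathbb{Z}_2\times D_4$ symmetry it is enough to treat one representative of each orbit.

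For existence I would exhibit the explicit candidate
\[
\mathcal{N}^{(2)}_4 = \langle AB12\rangle\langle CD34\rangle + \langle AB34\rangle\langle CD12\rangle + \langle AB23\rangle\langle CD14\rangle + \langle AB14\rangle\langle CD23\rangle,
\]
equivalently $\mathcal{N}^{(2)}_4 = \langle ABCD\rangle + \langle AB13\rangle\langle CD24\rangle + \langle AB24\rangle\langle CD13\rangle$ by (\ref{ABCD}), and check that it vanishes on each maximal residual component. On the product components this is immediate from the Schubert conditions: for instance on $V_1^{(1)}V_3^{(1)}$ the line $AB$ equals $(13)$, whose only nonzero Plücker coordinate is the diagonal $\langle AB24\rangle$, so every monomial of $AB$ appearing in the first expression vanishes. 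On the components lying in $L(1,2)$ I would substitute the incidence relation $\langle ABCD\rangle = 0$ together with the reductions of (\ref{ABCD}) such as (\ref{residual_by_drop}) to see that the restriction vanishes. This shows the solution space has dimension at least one.

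For uniqueness I would peel off the conditions in three stages, each using a distinguished orbit. \textbf{(i)} On $V_i^{(\ell)}V_{i+2}^{(\ell)}$ the line $A_\ell B_\ell$ is a diagonal, whose only nonzero Plücker coordinate is the opposite diagonal; imposing $\mathcal{N}\equiv 0$ there forces the entire row and column of $(c_{IJ})$ indexed by $(13)$ and $(24)$ to vanish, leaving the symmetric $4\times 4$ block supported on the edges $(12),(23),(34),(14)$, i.e.\ $10$ parameters. \textbf{(ii)} On $V_i^{(\ell)}P_{i+1}^{(\ell)}L_i^{(\ell+1)}$ the restriction of $\mathcal{N}$ becomes a nonzero edge-coordinate of $A_\ell B_\ell$ times a linear form $\sum_J c_{eJ}\langle A_{\ell+1}B_{\ell+1}J\rangle$ that must vanish on the hyperplane section $\Gr(2,4)\cap\{\langle\,\cdot\;i\,i{+}1\rangle=0\}$; here I would invoke the elementary fact that a linear form vanishing on such a section of the Klein quadric is a multiple of the defining coordinate, forcing all but the opposite-edge entry of that row to vanish. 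Running this over the rotation orbit collapses the edge block to the two-parameter family $\lambda\bigl(\langle AB12\rangle\langle CD34\rangle+\langle AB34\rangle\langle CD12\rangle\bigr)+\mu\bigl(\langle AB23\rangle\langle CD14\rangle+\langle AB14\rangle\langle CD23\rangle\bigr)$. \textbf{(iii)} Finally, on $L(1,2)V_1^{(1)}V_3^{(2)}$, parametrizing $AB=\langle Z_1,u\rangle$ and $CD=\langle Z_3,v\rangle$ with $u=(u_1,\dots,u_4)$, $v=(v_1,\dots,v_4)$, the constraint $\langle ABCD\rangle = u_2v_4-u_4v_2 = 0$ turns the restriction into $(\lambda-\mu)\,u_2v_4$, so vanishing forces $\lambda=\mu$. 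Together with existence this pins the solution to the line spanned by $\mathcal{N}^{(2)}_4$.

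The main obstacle I anticipate is step \textbf{(iii)} and, more generally, the treatment of the components lying on $L(1,2)$: there the conditions are not mere coordinate vanishings but genuine relations among the $c_{IJ}$, extracted by substituting $\langle ABCD\rangle=0$ into the restricted polynomial, and one must choose the representative component so that this incidence constraint directly equates the two surviving parameters. A secondary technical point is the repeated use of the quadric lemma in step \textbf{(ii)}, which requires verifying that each relevant hyperplane section of $\Gr(2,4)$ is nondegenerate (spans its ambient $\mathbb{P}^4$), so that no spurious linear forms vanish on it; this is exactly what guarantees that the counted conditions are independent and that the full linear system has rank precisely $20$.
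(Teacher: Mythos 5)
Your proposal is correct and follows essentially the same route as the paper's proof: both set up the $21$-dimensional linear system on $\mathcal{R}^{(2)}_4$ and eliminate coefficients orbit by orbit, first using the diagonal components $V_i^{(\ell)}V_{i+2}^{(\ell)}$ to kill the rows and columns indexed by $(13)$ and $(24)$, then using $V_i^{(\ell)}P_{i+1}^{(\ell)}L_i^{(\ell+1)}$ to reduce to the two opposite-edge parameters. The only (harmless) deviation is the final step: you equate the two surviving parameters on the residual stratum $L(1,2)V_1^{(1)}V_3^{(2)}$, whereas the paper uses $L(1,2)V_2^{(1)}P_2^{(1)}$ via (\ref{residual_by_drop}); both are legitimate residual components and yield the same conclusion, and your explicit statement of the hyperplane-section lemma for the Klein quadric merely makes precise a step the paper leaves implicit.
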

	
	\begin{proof}
		The proof is an explicit computation relying on the determination of the residual arrangement of $\mathcal{A}^{(2)}_4$ from Section \ref{sec:Real}. We begin with the general form of an element in $\mathcal{R}^{(2)}_4$, which is
		\begin{equation}\label{adjoint}
			\mathcal{N}^{(2)}_{4}(AB,CD)= \sum_{i,j=1}^{6} C_{ij} \, \langle AB Y_{i} \rangle \langle CD Y_{j} \rangle \ ,
		\end{equation}
		with
		\begin{equation}
			Y_{1}=(12) \ , \ Y_{2} = (23) \ , \ Y_{3}= (34) \ , \ Y_{4} = (41) \ , \ Y_{5}= (13) \ , \ Y_{6} = (24) \ ,
		\end{equation}
		and $C_{ij}=C_{ji} \in \mathbb{C}$.
		we show that the $20$ degrees of freedom, i.e. all coefficients $C_{ij}$ up to an overall scale, are fixed by requiring that $\mathcal{N}^{(2)}_{4}$ vanishes on all four-dimensional irreducible components of the residual arrangement, listed at the beginning of Section \ref{sec:Real}.
		
		We start from the first component $V_1^{(1)}V^{(1)}_3$ of the residual arrangement, on which
		\begin{equation}\label{first_residual_condition}
			\mathcal{N}^{(2)}_4(13,CD)= \sum_{j=1}^{6} C_{6j} \langle 1234 \rangle \langle CD Y_i \rangle \ .
		\end{equation}
		Imposing that $\mathcal{N}^{(2)}_4(13,CD)=0$ for every $CD \in \Gr(2,4)$ forces $C_{6j}=0$ for every $j=1,\dots,6$. By the dihedral and permutation symmetry, we also obtain $C_{j6}=C_{j5}=C_{5j}=0$. We proceed to the second residual component, $V_1^{(1)}P_2^{(1)}L_1^{(2)}$, on which
		\begin{equation}\label{second_residual_condition}
			\mathcal{N}^{(2)}_4(AB,CD)= \sum_{j=2}^{6} C_{3j} \langle AB34 \rangle \langle CD Y_j \rangle +  C_{6j} \langle AB24 \rangle \langle CD Y_j \rangle   \ .
		\end{equation}
		Imposing the vanishing of (\ref{second_residual_condition}) forces $C_{3j}=0$ for every $j=2,\dots,6$ in addition to the previous conditions. Again, by taking the symmetric configurations we find many more conditions, such that the only non-vanishing remaining coefficients are $C_{13}=C_{31}$ and $C_{24}=C_{42}$. At this point, one checks that $\mathcal{N}^{(2)}_4$ also interpolates the components $P^{(1)}_2V^{(2)}_2$ and $L_1^{(1)}L_3^{(1)}L_2^{(2)}L_4^{(2)}$. We are left with the last residual component $L(1,2)V_2^{(1)}P_2^{(1)}$, on which by \eqref{residual_by_drop} we have
		\begin{equation}
			\mathcal{N}^{(2)}_4(AB,CD)= (C_{13}-C_{24}) \langle AB 34 \rangle \langle CD 12 \rangle \ ,
		\end{equation}
		whose vanishing imposes $C_{24}=C_{13}=: N_4^{(2)}$. At this point, one checks that $\mathcal{N}^{(2)}_4$ vanishes also on $L(1,2)V_1^{(1)}P_2^{(1)}$, as well as on the three-dimensional residual strata of Section \ref{sec:Real}.
		
		As a result, we obtain
		\begin{equation}\label{L2n4_form}
			\begin{aligned}
				\mathcal{N}^{(2)}_4(AB,CD) = N_4^{(2)} \bigl( & \langle A B 1 2 \rangle \langle CD 34\rangle + \langle A B 34 \rangle \langle CD 12\rangle \\
				&+ \langle A B 14 \rangle \langle CD 23\rangle + \langle A B 23 \rangle \langle CD 14\rangle \bigl) \ ,
			\end{aligned}
		\end{equation}
		which for $N_4^{(2)} =\langle 1234 \rangle^{3}$ reproduces the known result, see e.g. \cite[Eq. (2)]{Dian_2023}.
	\end{proof}
	
	At this point we stress again that $\mathcal{A}^{(2)}_4$ is known to not be a positive geometry according the the definition in \cite{Arkani_Hamed_2017}, because of the presence of \textit{internal boundaries} and non-unit maximal residues \cite{Dian_2023}. Nevertheless, it is believed that $\mathcal{A}^{(2)}_4$ is a \textit{weighted positive geometry} \cite{Dian_2023}, a generalization of positive geometry. In order to prove this, one should compute the residues of (\ref{integrand_ansatz}) with the explicit form and normalization of (\ref{L2n4_form}) on every stratum of the algebraic stratification. This has been done for $\mathcal{A}^{(1)}_4$ in \cite{Ranestad:2024svp}. Some results in this direction for $\mathcal{A}^{(2)}_4$ can be found in \cite[Section 9.1]{Franco:2014csa}.

	\section*{Acknowledgements}
	The authors are grateful to Thomas Lam, Rainer Sinn, Bernd Sturmfels, and Simon Telen for their valuable discussions, and to Lizzie Pratt and Joris Köfler for their careful proofreading of the manuscript and insightful feedback.
	GD is funded by the Deutsche Forschungsgemeinschaft (DFG, German Research Foundation) under Germany’s Excellence Strategy – EXC 2121 „Quantum Universe“ – 390833306. GD also thanks the Galileo Galilei Institute for Theoretical Physics and the  (MPI-MIS) in Leipzig for the hospitality and the INFN  for partial support during the completion of this work. EM is funded by the European Union (ERC, UNIVERSE PLUS, 101118787). Views and opinions expressed are however those of the authors only and do
	not necessarily reflect those of the European Union or
	the European Research Council Executive Agency. Neither the European Union nor the granting authority can
	be held responsible for them.
	FT is funded by the Royal Society grant number
	URF\textbackslash R1\textbackslash 201473. 
	
	\clearpage

	\appendix

	\section{List of strata}
	\label{appendix_lists}
	

	\begin{table}[H]
		\centering
		\resizebox{\columnwidth}{!}{%
			\begin{tabular}{c|c|c|c|c|c}
				Schubert condition & indices &multiplicity & type & components & regions  \\
				$L_i^{(\ell)} $& &$8$ & b & $1$ & 1 \\
				\hline
				$ L(1,2)$ &  &$1$ & b & $1 $ & 1
			\end{tabular}
		}
		\caption{The $9$ boundary strata of codimension one. }
		\label{table_codim_one}
	\end{table}

	\begin{table}[H]
		\centering
		\resizebox{\columnwidth}{!}{%
			\begin{tabular}{c|c|c|c|c|c}
				Schubert condition & indices &multiplicity & type & components & regions \\
				$V_i^{(\ell)} $ &  & $8$ & b & 1 & 1 \\
				$P_i^{(\ell)} $ &  & $8$ & b & 1 & 1 \\
				$L_i^{(\ell)} L_{i+2}^{(\ell)} $&  & $4$ & b & 2 & 2 \\   
				$L_i^{(\ell)} L_j^{(\ell+1)} $ & $ j=i\pm1  $ & $8$ & b & 1 & 1\\
				$L_i^{(\ell)} L_{i+2}^{(\ell+1)} $ & & $4$ & b & 1 & 1\\
				$L_i^{(\ell)} L_i^{(\ell+1)} $ &  & $4$ & b & 1 & 2 \\
				\hline
				$ L(1,2) L_i^{(\ell)} $ &  & $8$ & b & 1 & 1 \\
			\end{tabular}
		}
		\caption{The $44$ boundary strata at codimension two.}
		\label{table_codim_two}
	\end{table}

	\begin{table}[H]
		\centering
		\resizebox{\columnwidth}{!}{%
			\begin{tabular}{c|c|c|c|c|c}
				Schubert condition & indices &multiplicity & type & components & regions   \\
				$V_i^{(\ell)}P_j^{(\ell)} $ & $|i-j|\neq 2  $ & $24$ & b & 1 & 1\\
				$V_i^{(\ell)} L_j^{(\ell+1)} $ &  & $32$ & b & 1& 1\\
				$P_i^{(\ell)} L_j^{(\ell+1)}$ &  & $32$ & b & 1& 1 \\
				$L_i^{(\ell)} L_{i+2}^{(\ell)} L^{(\ell+1)}_j $& $j=i\pm1 $ & $8$ & b & 1 & 1\\
				$L_i^{(\ell)} L_{i+2}^{(\ell)} L^{(\ell+1)}_j $& $j=i,i+2 $ & $8$ & b & 1 & 2\\         
				\hline
				$L(1,2) V_i^{(\ell)} $ &  & $8$ & b & 1& 1 \\
				$L(1,2) P_i^{(\ell)} $ &  & $8$ & b & 1& 1 \\
				$L(1,2) L_i^{(\ell)} L_{i+2}^{(\ell)} $ &  & $4$ & b & 2 & 2 \\
				$L(1,2) L_i^{(\ell)}L_{j}^{(\ell+1)} $ & $j = i\pm1 $ & $8$ & b & 1& 1 \\
				$L(1,2) L_i^{(\ell)}L_{i+2}^{(\ell+1)} $ &  & $4$ & b & 1 & 4 \\
				$V_i(1,2)  $ &  & $4$ & b & 1 & 2 \\
				$P_i(1,2) $ &  & $4$ & b & 1 & 2 \\
			\end{tabular}
		}
		\caption{The $144$ boundary strata at codimension three.}
		\label{table_codim_three}
	\end{table}

	\begin{table}[H]
		\centering
		\resizebox{\columnwidth}{!}{%
			\begin{tabular}{c|c|c|c|c|c}
				Schubert condition & indices &multiplicity& type & components & regions  \\
				$V_i^{(\ell)}V_j^{(\ell)} $ & $j=i\pm1 $ & $8$ & b & 1& 1 \\
				$V_i^{(\ell)}V_j^{(\ell)} $ & $j=i,i+2 $ & $4$ & r &  \\
				$V_i^{(\ell)} P_j^{(\ell)} L_i^{(\ell+1)} $ & $j=i\pm1 $ & $16$ & r &  \\
				$V_i^{(\ell)} P_j^{(\ell)} L_k^{(\ell+1)} $ & else & $80$ & b & 1  & 1\\
				$V_i^{(\ell)} V_j^{(\ell+1)}$ & $i \neq j  $ & $12$ & b & 1 & 1\\
				$V_i^{(\ell)} V_i^{(\ell+1)}$ &  & $4$ & b & 1 & 4 \\
				$V_i^{(\ell)} P_j^{(\ell+1)}$ & $i \neq j$ & $24$ & b & 1& 1 \\
				$V_i^{(\ell)} P_i^{(\ell+1)}$ &  & $8$ & r & \\
				$V_i^{(\ell)} L_{j}^{(\ell+1)} L^{(\ell+1)}_{j+2} $&  & $16$ & b & 1 & 1\\         
				$P_i^{(\ell)} P_j^{(\ell+1)}$ & $i \neq j $ & $12$ & b & 1 & 1\\
				$P_i^{(\ell)} P_i^{(\ell+1)}$ &  & $4$ & b & 1 & 4 \\
				$P^{(\ell)}_i L_j^{(\ell+1)}L_{j+2}^{(\ell+1)} $ &  & $16$ & b & 1& 1 \\
				$L_i^{(\ell)} L_{i+2}^{(\ell)} L_{i-1}^{(\ell+1)} L^{(\ell+1)}_{i+1} $&  & $2$ & r & \\ 
				$L_i^{(\ell)} L_{i+2}^{(\ell)} L_{i}^{(\ell+1)} L^{(\ell+1)}_{i+2} $&  & $2$ & b & 1 & 2 \\
				\hline
				$L(1,2) V_{i}^{(\ell)} P^{(\ell)}_{j} $& $j=i \pm 1$ & $16$ & b & 1  & 1\\ 
				$L(1,2) V_{i}^{(\ell)} P^{(\ell)}_{i} $&  & $8$ & r &   \\ 
				$L(1,2) V_{i}^{(\ell)} L^{(\ell+1)}_{j} $& $j=i+1,i+2$ & $16$ & b & 1 & 1 \\
				$P_j(1,2) V_{i}^{(\ell)} $& $j=i-1,i$ & $16$ & b & 1 & 1\\
				$L(1,2) P_{i}^{(\ell)} L^{(\ell+1)}_{j} $& $j = i+1,i+2 $ & $16$ & b & 1& 1 \\
				$V_j(1,2) P^{(\ell)}_{i}  $& $j=i-1,i $ & $16$ & b & 1& 1 \\
				$L(1,2) L_{i}^{(\ell)} L_{i+2}^{(\ell)} L^{(\ell+1)}_{j} $& $j=i-1,i+1 $ & $8$ & b & 1& 1 \\
				$V_i(1,2) P_i(1,2) $&  & $4$ & b & 1 & 1\\
				$V_i(1,2) L_{j}^{(\ell)} L_{j+2}^{(\ell)}  $& $j=i,i+2 $ & $8$ & b & 1 & 2 \\
				$P_i(1,2) L_{j}^{(\ell)} L_{j+2}^{(\ell)} L^{(\ell+1)}_{i} $& $j=i,i+2 $ & $8$ & b & 1 & 2 \\
			\end{tabular}
		}
		\caption{The $324$ boundary strata at codimension four. }
		\label{table_codim_four}
	\end{table}

	\begin{table}[H]
		\centering
		\resizebox{\columnwidth}{!}{%
			\begin{tabular}{c|c|c|c|c|c}
				Schubert condition & indices &multiplicity & type & components & regions \\
				$V_i^{(\ell)}V_j^{(\ell)}L^{(\ell+1)}_k $ & $j=i \pm 1$ & $32$ & b & 1 & 1 \\
				$V_i^{(\ell)}V_j^{(\ell)}L^{(\ell+1)}_k $ & $j=i,i+2$ & $16$ & r & \\
				$V_i^{(\ell)} P_{j}^{(\ell)} V_{j}^{(\ell+1)} $ & $j=i\pm1$ & $16$ & r & \\
				$V_i^{(\ell)} P_{j}^{(\ell)} V_i^{(\ell+1)} $ & $j=i\pm1$ & $16$ & b & 1 & 2 \\
				$V_i^{(\ell)} P_j^{(\ell)} V_k^{(\ell+1)} $ & else & $64$ & b & 1 & 1 \\
				$V_i^{(\ell)} P_j^{(\ell)} P_i^{(\ell+1)} $ & $ j = i \pm 1 $ & $16$ & r & \\
				$V_i^{(\ell)} P_j^{(\ell)} P_j^{(\ell+1)} $ & $j = i \pm 1 $ & $16$ & b & 1 & 2 \\
				$V_i^{(\ell)} P_j^{(\ell)} P_k^{(\ell+1)} $ & else & $64$ & b & 1 & 1 \\
				$V_i^{(\ell)} P_{i+1}^{(\ell)} L_i^{(\ell+1)} L_{i+2}^{(\ell+1)} $ &  & $16$ & r & \\     
				$V_i^{(\ell)} P_j^{(\ell)} L_k^{(\ell+1)} L_{k+2}^{(\ell+1)} $ & else & $32$ & b & 1 & 1 \\     
				\hline
				$L(1,2) V_{i}^{(\ell)} V^{(\ell)}_{i+2}  $&  & $4$ & r & \\ 
				$ L(1,2) V_{i}^{(\ell)} P^{(\ell)}_{j}L^{(\ell+1)}_{j+1} $& $j=i\pm 1$ & $16$ & b & 1 & 1 \\
				$ L(1,2) V_{i}^{(\ell)} P^{(\ell)}_{i}L^{(\ell+1)}_k $& $k=i+1,i+2$ & $16$ & r & \\
				$ P_{j+2}(1,2) V_{i}^{(\ell)} P^{(\ell)}_{j} $& $j= i \pm 1$ & $16$ & b & 1 & 1 \\
				$ V_i(1,2) V_{i}^{(\ell)} P^{(\ell)}_{j} $& $j= i \pm 1$ & $16$ & b & 1 & 1 \\
				$ V_i(1,2) L_{j+2}^{(\ell)} L_{i}^{(\ell+1)} L_{i+2}^{(\ell+1)}  $& $j=i,i+2 $ & $8$ & b & 1 & 2 \\
				$P_j(1,2) V_{i}^{(\ell)} V^{(\ell+1)}_{i} $& $j=i\pm 1$ & $8$ & b & 1 & 2 \\
				$L(1,2) V_{i}^{(\ell)} V^{(\ell+1)}_{i+2} $&  & $4$ & r & \\
				$P_i(1,2) V_{i}^{(\ell)} V^{(\ell+1)}_{j} $& $j=i\pm 1$ & $8$ & b & 1 & 1 \\
				$L(1,2) V_{i}^{(\ell)} P^{(\ell+1)}_{i+2} $&  & $8$ & b & 1 & 1\\
				$P_i(1,2) V_{i}^{(\ell)} L^{(\ell+1)}_{j} L^{(\ell+1)}_{j+2} $& $i,j \in [4]$ & $16$ & b & 1 & 1\\
				$V_i(1,2) P_{j}^{(\ell)} P^{(\ell+1)}_{j}  $& $j=i,i-1$ & $8$ & b & 1 & 2 \\
				$V_i(1,2) P_{i}^{(\ell)} P^{(\ell+1)}_{i+1}  $& & $8$ & b & 1 & 1\\
				$L(1,2) P_{i}^{(\ell)} P^{(\ell+1)}_{i+2}  $&  & $4$ & r & \\
				$V_{i}(1,2) P_{j}^{(\ell)} L^{(\ell+1)}_{i}  L^{(\ell+1)}_{i+2}  $& $j=i,i+1$ & $16$ & b & 1 & 1\\
				$V_{i}(1,2) P_{i+2}(1,2)  $&  & $4$ & b & 1 & 2 \\
				$L(1,2) L_{i}^{(\ell)} L_{i+2}^{(\ell)} L^{(\ell+1)}_{i-1}  L^{(\ell+1)}_{i+1}  $&  & $2$ & r &
			\end{tabular}
		}
		\caption{The $450$ boundary strata at codimension five. }
		\label{table_codim_five}
	\end{table}

	\begin{table}[H]
		\centering
		\resizebox{\columnwidth}{!}{%
			\begin{tabular}{c|c|c|c|c|c}
				Schubert condition & indices &multiplicity& type & components & regions  \\
				$V_i^{(\ell)}P_{i+1}^{(\ell)} V_{j}^{(\ell+1)}P_{j+1}^{(\ell+1)} $ & $j=i\pm1$ & $24$ & r &  \\
				$V_i^{(\ell)}P_j^{(\ell)} V_i^{(\ell+1)}P_j^{(\ell+1)} $ & $|i-j|\neq 2 $ & $12$ & b & 1 & 2 \\
				$V_i^{(\ell)}P_j^{(\ell)} V_k^{(\ell+1)}P_l^{(\ell+1)} $ & else & $116$ & b & 1 & 1 \\
				$V_i^{(\ell)} V_{i+2}^{(\ell)} V_j^{(\ell+1)} $ & $j=i \pm 1$ & $8$ & r & \\
				$V_i^{(\ell)} V_j^{(\ell)} V_k^{(\ell+1)} $ & else, $i\neq j$ & $40$ & b & 1 & 1 \\
				$V_i^{(\ell)} V_{i+2}^{(\ell)} P_j^{(\ell+1)} $ & $j=i,i+2$ & $8$ & r & \\
				$V_i^{(\ell)} V_j^{(\ell)} P_k^{(\ell+1)} $ & else, $i \neq j$ & $40$ & b & 1 & 1 \\
				$V_i^{(\ell)} V_{i+2}^{(\ell)} L_j^{(\ell+1)} L_{j+2}^{(\ell+1)} $ &  & $8$ & r & \\
				$V_i^{(\ell)} V_{i+1}^{(\ell)} L_j^{(\ell+1)} L_{j+2}^{(\ell+1)} $ &  & $16$ & b & 1 & 1 \\
				\hline
				$L(1,2) V_i^{(\ell)}P_i^{(\ell)} V_{i+2}^{(\ell+1)} $ &  & $8$ & r & \\
				$L(1,2) V_i^{(\ell)}P_{j}^{(\ell)} V_{j+2}^{(\ell+1)} $ & $j=i\pm 1 $ & $16$ & b & 1 & 1 \\
				$P_{j+2}(1,2) V_i^{(\ell)}P_{j}^{(\ell)} V_{i}^{(\ell+1)} $ & $j=i \pm 1 $ & $16$ & b & 1 & 2 \\
				$L(1,2) V_i^{(\ell)} P^{(\ell)}_i P_{i+2}^{(\ell+1)}  $ &  & $8$ & r &  \\
				$V_i(1,2) V_i^{(\ell)} P_{j}^{(\ell)} P_{i+2}^{(\ell+1)}  $ & $j=i \pm 1$ & $32$ & b & 1& 1 \\
				$V_i(1,2) V_i^{(\ell)} P_{j}^{(\ell)} L_{j}^{(\ell+1)} L_{j+2}^{(\ell+1)}  $ & $j = i \pm 1 $ & $16$ & b & 1& 1 \\
				\hline
				$V^{(12)}_{i}  $ &  & $4$ & b & 1 & 1\\
				$P^{(12)}_{i}$ &  & $4$ & b & 1 & 1\\
				$V_i(1,2) V_{i+2}(1,2)   $ &  & $2$ & b & 1 & 1\\     
			\end{tabular}
		}
		\caption{The $370$ boundary strata at codimension six. }
		\label{table_codim_six}
	\end{table}

	\begin{table}[H]
		\centering
		\resizebox{\columnwidth}{!}{%
			\begin{tabular}{c|c|c|c|c|c}
				Schubert condition & indices &multiplicity& type & components & regions \\
				$V_i^{(\ell)}V_{i+2}^{(\ell)}V^{(\ell+1)}_j P^{(\ell+1)}_i $ & $j=i \pm 1 $ & $8$ & r & \\
				$V_i^{(\ell)}V_j^{(\ell)}V^{(\ell+1)}_kP^{(\ell+1)}_l $ & else, $|k-l|\neq 2$ & $136$ & b & 1 & 1\\
				\hline
				$L(1,2) V_i^{(\ell)} P_i^{(\ell)} V^{(\ell+1)}_{i+2}P^{(\ell+1)}_{i+2} $ &  & $4$ & r &  \\
				$V_i(1,2)V_{i-1}^{(\ell)}V^{(\ell+1)}_{i+2} $ &  & $8$ & b & 1 & 1\\
				\hline
				$V_i^{(12)} P_j^{(\ell)} P^{(\ell+1)}_j$ & $j=i \pm 1 $ & $8$ & b & 1 & 1 \\
				$V_i^{(12)} P^{(12)}_i  $ &  & $4$ & b & 1 & 1\\
				
			\end{tabular}
		}
		\caption{The $168$ boundary strata at codimension seven. }
		\label{table_codim_seven}
	\end{table}

	\begin{table}[H]
		\centering
		\resizebox{\columnwidth}{!}{%
			\begin{tabular}{c|c|c|c|c|c}
				Schubert condition & indices &multiplicity & type & components & regions \\
				$V_i^{(\ell)}V_{i+2}^{(\ell)}V_j^{(\ell+1)}V_{j+2}^{(\ell+1)} $ & $j=i+1 $ & $2$ & r & \\
				$V_i^{(\ell)}V_j^{(\ell)}V_k^{(\ell+1)}V_l^{(\ell+1)} $ & else & $34$ & b & 1 & 1 \\
			\end{tabular}
		}
		\caption{The $36$ boundary strata at codimension eight.}
		\label{table_codim_eight}
	\end{table}

	\bibliography{bibliography}

\end{document}